\newtheorem{theorem}{Theorem}
\newtheorem{lemma}[theorem]{Lemma}
\newtheorem{proposition}[theorem]{Proposition}
\theoremstyle{definition}
\newtheorem{definition}[theorem]{Definition}
\theoremstyle{remark}
\newtheorem{remark}{Remark}
\def\supp{\mathop{\rm supp}}
\def\ox{{\otimes}}
\newcommand{\tr}{\operatorname{Tr}}
\newcommand{\mc}{\mathcal}
\newcommand{\beq}{\begin{equation}}
\newcommand{\eeq}{\end{equation}}
 \newenvironment{proofof}[1]{\vspace*{5mm} \par \noindent
{\it Proof of #1:\hspace{2mm}}}{\qed
}
\begin{document}
\title{Strong Converse Exponent of Quantum Dichotomies}

\author{
Mario~Berta,
Yongsheng Yao  \thanks{Mario Berta is with the Institute for Quantum Information, RWTH Aachen University, Aachen 52074, Germany. Yongsheng Yao is with the Institute for Quantum Information, RWTH Aachen University, Aachen 52074, Germany. (yongsh.yao@gmail.com). }}
\date{}

\maketitle

\begin{abstract}
    The quantum dichotomies problem asks at what rate one pair of quantum states can be approximately mapped into another pair of quantum states. In the many copy limit and for vanishing error, the optimal rate is known to be given by the ratio of the respective quantum relative distances. Here, we study the large-deviation behavior of quantum dichotomies and determine the exact strong converse exponent based on the purified distance. This is the first time to establish the exact high-error large-deviation analysis for this task in fully
quantum setting.
\end{abstract}

\begin{IEEEkeywords}
strong converse exponent, quantum dichotomies, sandwiched R\'enyi divergence, purified distance, trace distance
\end{IEEEkeywords}

\section{Introduction}

 Consider two quantum dichotomies $(\rho_1,\sigma_1)$ and $(\rho_2,\sigma_2)$, quantum dichotomies transformation is a procedure of using a quantum channel to transform $(\rho_1,\sigma_1)$ into as much $(\rho_2,\sigma_2)$ as possible, in which we require that the transformation $\sigma_1 \rightarrow \sigma_2$ is exact and the transformation $\rho_1 \rightarrow \rho_2$ is allowed to be approximate~\cite{FDT2019information, WangWilde2019resource}. It is an important quantum information processing task which have wide variety of applications in quantum resource theory~\cite{FDT2019information, WangWilde2019resource,LPCRTK2024quantum, KCT2019avoid, KCT2018Beyond, KCT2019Moderate, KumagaiHayashi2013second, Wilde2022distinguishability}. The commonly used distances to measure the closeness between the final state and the target state $\rho_2$ are the trace distance and the purified distance. In the asymptotic setting when unlimited $(\rho_1,\sigma_1)$ are available, no matter which distance we choose, the optimal rate at which the asymptotically perfect transformation can be achieved equals to the ratio between the quantum relative entropy $D(\rho_1\|\sigma_1)$ and $D(\rho_2 \| \sigma_2)$~\cite{FDT2019information, WangWilde2019resource}.

Although the first-order asymptotic analysis of quantum dichotomies transformation has been well understood, the finer asymptotic results, including the second-order expansion, moderate-deviation analysis and large-deviation type exponential analysis, are more complicated and they depend on the choice of the distance. The reference~\cite{LPCRTK2024quantum} established the exact second-order asymptotic, moderate-deviation expansion and large-deviation type exponential behavior based on the trace distance under the assumption $\rho_2$ commutes with $\sigma_2$. The reference~\cite{KCT2018Beyond} and~\cite{KCT2019Moderate} established the second-order asymptotic and the moderate-deviation analysis based on the purified distance, respectively,  when $(\rho_1,\sigma_1)$ and $(\rho_2,\sigma_2)$ are both classical dichotomies. To the best of our knowledge, the current research on the finer asymptotic analysis of quantum dichotomies transformation are only restricted in classical or semi-classical setting. The related results are still unknown in fully quantum setting.

In this paper, we study the high-error large-deviation exponential behavior for quantum dichotomies transformation, also known as the strong converse exponent for this task, which is the best rate of exponential convergence of the performance of this task towards the worst. We prove that for any quantum dichotomies $(\rho_1,\sigma_1)$ and $(\rho_2,\sigma_2)$, when the transformation rate $r$ is larger than the first-order asymptotic  $D(\rho_1\|\sigma_1)\cdot D(\rho_2\|\sigma_2)^{-1}$, the strong converse exponent based on the purified distance is given by
\begin{equation}
\sup_{\frac{1}{2} \leq \alpha \leq 1} \frac{1-\alpha}{\alpha}\Big\{rD_\alpha^*(\rho_2\|\sigma_2)-D_{\alpha/(2\alpha-1)}^*(\rho_1\|\sigma_1) \Big\},
\end{equation}
where $D_\alpha^*(\rho\|\sigma)$ is the sandwiched R\'enyi divergence. Our work is the first time to establish the exact strong converse exponent for this task in fully quantum setting, filling the gap in previous work.


In addition, if $\rho_2$ is a pure state, we also show that the strong converse exponent based on the trace distance is 
\begin{equation}
\sup_{\beta \geq 1} \frac{\beta-1}{\beta}\Big\{rD_{1/\beta}(\rho_2 \| \sigma_2)-D_{\beta}^*(\rho_1\|\sigma_1)  \Big\},
\end{equation}
where $D_\alpha(\rho\|\sigma)$ is the Petz R\'enyi divergence. We conjecture that this formula still holds for general quantum states $\rho_2$. The Petz R\'enyi divergence was used to characterize the reliability functions~(the best rate of
exponential convergence of the error of a task towards the perfect) for 
quantum information processing tasks in previous research~\cite{AKCMBMA2007discriminating, NussbaumSzkola2009chernoff, Li2016discriminating, Nagaoka2006converse, Hayashi2007error, ANSV2008asymptotic, Renes2022achievable, CHDH2020non, HayashiTomamichel2016correlation}. If the conjecture is correct, this will provide new type of operational meaning for the Petz R\'enyi divergence.

In the derivation of the achievability part of the strong converse exponent based on the purified distance, we use a technique developed by Mosonyi and Ogawa~\cite{MosonyiOgawa2017strong}, to derive a weaker bound characterized by the log-Euclidean R\'enyi divergence and then improve it to the final bound, one of the authors also use the similar technique to establish the strong converse exponents for privacy amplification and quantum information decoupling~\cite{LiYao2024operational}. The proof of the optimality part comes from an application of the operator H\"{o}lder inequality and this part can also be deduced from the work of Wang and Wilde~\cite{WangWilde2019resource}. For the strong converse exponent based on the trace distance, we mainly make use of an improved Fuchs-van de Graaf to relate it to the strong converse exponent based on the purified distance.

The remainder of this paper is organized as follows. In Section~\ref{sec:pre}, we introduce some preliminaries in quantum information theory. In Section~\ref{sec:main}, we present the main problem and the main result. In Section~\ref{sec:achi}, we give the proof of the achievability part of the strong converse exponent for quantum dichotomies transformation.  In Section~\ref{sec:conclu}, we conclude this paper with some discussion.

\section{Preliminaries}
\label{sec:pre}

\subsection{Notation}

Let $\mathcal{H}$ be a finite-dimensional Hilbert space, we denote the set of linear operators on $\mathcal{H}$ as $\mathcal{L}(\mathcal{H})$. Let $\mathcal{P}(\mathcal{H})$ and $\mathcal{S}(\mathcal{H})$ represent the set of 
positive semi-definite operators and the set of quantum states, respectively. The support of an operator $A \in \mathcal{L}(\mathcal{H})$ is denoted as $\supp(A)$. For $\rho \in \mathcal{S}(\mathcal{H})$, we use the notation
$\mathcal{S}_\rho$ for the set of quantum states whose supports are contained in $\supp(\rho)$. We denote as $|\mathcal{H}|$ the dimension of $\mathcal{H}$.

The trace distance and purified distance are two most commonly used distances to measure the closeness of two quantum states. For $\rho, \sigma \in \mathcal{S}(\mathcal{H})$, the trace distance $d(\rho,\sigma)$  and the 
purified distance $P(\rho,\sigma)$ are given by 
\begin{align}
\begin{split}
&d(\rho,\sigma):=\frac{\|\rho-\sigma\|_1}{2}, \\
&P(\rho,\sigma):=\sqrt{1-F^2(\rho,\sigma)},
\end{split}
\end{align}
where $\|\cdot\|_1$ is the Schatten-1 norm and $F(\rho,\sigma):=\|\sqrt{\rho}\sqrt{\sigma}\|_1$ is the fidelity function.

A quantum channel $\mathcal{N}$ is a completely positive and trace-preserving map. Let $A$ be a self-adjoint operator with spectral projections $P_1, \ldots, P_r$. The pinching channel $\mathcal{E}_A$ associated with $A$ is defined as
\[
\mathcal{E}_A : X\mapsto\sum_{i=1}^r P_i X P_i.
\]
The pinching inequality~\cite{Hayashi2002optimal} tells that for any $\sigma \in \mathcal{P}(\mathcal{H})$, we have
\begin{equation}
\sigma \leq v(A) \mathcal{E}_A(\sigma),
\end{equation}
where $v(A)$ is the number of different eigenvalues of $A$.


\subsection{Quantum R\'enyi divergences}

The classical R\'enyi divergence~\cite{Renyi1961measures} is an information quantity which play important roles in many fields. Because of the non-commutativity of quantum states, there are infinite in-equivalent quantum generalizations of classical R\'enyi divergence. In this section, we introduce the sandwiched R\'enyi divergence~\cite{MDSFT2013on, WWY2014strong}, the Petz R\'enyi divergence~\cite{Petz1986quasi} and the log-Euclidean R\'enyi divergence~\cite{MosonyiOgawa2017strong, HiaiPetz1993the} which have been found wide applications so far~\cite{AKCMBMA2007discriminating, NussbaumSzkola2009chernoff, Li2016discriminating, Nagaoka2006converse, Hayashi2007error, ANSV2008asymptotic, Renes2022achievable, CHDH2020non, HayashiTomamichel2016correlation, MosonyiOgawa2015quantum, GuptaWilde2015multiplicativity, CMW2016strong, LiYao2024strong, LYH2023tight, LiYao2024reliability, Hayashi2015precise, LiYao2021reliable}.

\begin{definition}
\label{definition:sand}
Let $\alpha\in(0,+\infty)\setminus\{1\}$, $\rho\in\mc{S}(\mc{H})$ and $\sigma\in\mc{P}(\mc{H})$.
When $\alpha >1$ and $\supp(\rho)\subseteq\supp(\sigma)$ or $\alpha\in (0,1)$ and $\supp(\rho)\not\perp\supp(\sigma)$, the sandwiched R{\'e}nyi divergence and the Petz R\'enyi divergence of order $\alpha $
are respectively defined as
\begin{align}
\begin{split}
&D_{\alpha}^*(\rho \| \sigma):=\frac{1}{\alpha-1} \log Q_{\alpha}^*(\rho \| \sigma),
\quad\text{with}\ \
Q_{\alpha}^*(\rho \| \sigma)=\tr {({\sigma}^{\frac{1-\alpha}{2\alpha}} \rho {\sigma}^{\frac{1-\alpha}{2\alpha}})}^\alpha, \\
&D_{\alpha}(\rho \| \sigma):=\frac{1}{\alpha-1} \log Q_{\alpha}(\rho \| \sigma),
\quad\text{with}\ \
Q_{\alpha}(\rho \| \sigma)=\tr \rho^\alpha \sigma^{1-\alpha};
\end{split}
\end{align}
otherwise, we set $D_{\alpha}^*(\rho \| \sigma)=D(\rho\|\sigma)=+\infty$. When $\alpha > 1$ and $\supp(\rho)\subseteq\supp(\sigma)$  or  $\alpha \in (0,1)$ and $\supp(\rho)\cap\supp(\sigma)\neq\{0\}$, the log-Euclidean R{\'e}nyi divergence of order
$\alpha $ is defined as
\beq
D_{\alpha}^{\flat}(\rho \| \sigma):=\frac{1}{\alpha-1}\log Q_{\alpha}^{\flat}(\rho \| \sigma),
\quad\text{with}\ \
Q_{\alpha}^{\flat}(\rho \| \sigma)=\tr 2^{\alpha \log \rho +(1-\alpha) \log \sigma};
\eeq
otherwise, we set $D_{\alpha}^{\flat}(\rho \| \sigma)=+\infty$. 
\end{definition}
When $\alpha \rightarrow 1$, all above quantum R\'enyi divergences converge to the Umegaki relative entropy~\cite{Umegaki1954conditional}
\beq
D(\rho\|\sigma):= \begin{cases}
\tr\rho(\log\rho-\log\sigma) & \text{ if }\supp(\rho)\subseteq\supp(\sigma), \\
+\infty                        & \text{ otherwise.}
                  \end{cases}
\eeq

In the following proposition, we collect some important properties of these quantum R\'enyi divergences.

\begin{proposition}
\label{prop:mainpro}
Let $\rho \in \mc{S}(\mc{H})$ and $\sigma \in \mc{P}(\mc{H})$. The sandwiched R{\'e}nyi
divergence, the Petz R\'enyi divergence and the log-Euclidean R{\'e}nyi divergence satisfy the following properties.
\begin{enumerate}[(i)]
  \item Monotonicity in R{\'e}nyi parameter~\cite{MosonyiOgawa2017strong, MDSFT2013on,Beigi2013sandwiched}: if $0\leq \alpha \leq \beta$, then
      $D_{\alpha}^{(t)}(\rho \| \sigma) \leq  D^{(t)}_{\beta}(\rho \| \sigma)$,
      for $(t)=*$, $(t)=\{\}$ and $(t)=\flat$;
  \item Monotonicity in $\sigma$~\cite{MosonyiOgawa2017strong, MDSFT2013on}: if $\sigma' \geq \sigma$, then $D_{\alpha}^{(t)}(\rho \| \sigma') \leq D_{\alpha}^{(t)}(\rho \| \sigma)$,
      for $(t)=*$, $\alpha \in [\frac{1}{2},+\infty)$, for $(t)=\{\}$, $\alpha \in [0,2]$ and for $(t)=\flat$, $\alpha \in [0,+\infty)$;
  \item Variational representation~\cite{MosonyiOgawa2017strong}: the log-Euclidean
      R{\'e}nyi divergence has the following variational representation
      \beq
      D_{\alpha}^{\flat}(\rho \| \sigma)= \begin{cases}
         \min\limits_{\tau \in \mc{S}(\mc{H})} \big\{D(\tau \| \sigma)
         -\frac{\alpha}{\alpha-1}D(\tau \| \rho)\big\}, & \alpha \in (0,1), \\
         \max\limits_{\tau \in \mc{S}_\rho(\mc{H})} \big\{D(\tau \| \sigma)
         -\frac{\alpha}{\alpha-1}D(\tau \| \rho)\big\}, & \alpha \in (1,+\infty);
      \end{cases}
      \eeq
  \item Data processing inequality~\cite{MosonyiOgawa2017strong, MDSFT2013on, WWY2014strong, Beigi2013sandwiched, FrankLieb2013monotonicity}: letting $\mc{N}$ be a CPTP map from $\mc{L}(\mc{H})$ to $\mc{L}(\mc{H}')$, we have
      \beq
      D_{\alpha}^{(\rm{t})}(\mc{N}(\rho) \| \mc{N}(\sigma)) \leq D_{\alpha}^{(\rm{t})}(\rho \| \sigma),
      \eeq
      for $(t)=\flat$, $\alpha \in [0,1]$, for $(t)=\{\}$, $\alpha \in [0,2]$ and for $(t)=*$, $\alpha \in [\frac{1}{2},+\infty)$;
 \item Approximation by pinching~\cite{HayashiTomamichel2016correlation, MosonyiOgawa2015quantum}: for any $\alpha \geq 0$, we have
      \beq
      D_{\alpha}^*(\mc{E}_\sigma(\rho) \| \sigma) \leq D_{\alpha}^*(\rho \| \sigma) \leq D_{\alpha}^*(\mc{E}_\sigma(\rho) \| \sigma)+2\log v(\sigma).
      \eeq
\end{enumerate}
\end{proposition}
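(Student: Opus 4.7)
The proposition is a compendium of known properties from the cited references, so the proof is primarily a matter of assembling existing arguments. Nevertheless, one can organize the verification around a few central techniques, and I would attack the five items in roughly the order they are listed.

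For items (i) and (ii), the plan is to apply Jensen's inequality combined with operator (anti-)monotonicity of $x \mapsto x^p$. Monotonicity in the R\'enyi parameter reduces, in each case, to applying Jensen to a power function against the spectral measure of the relevant operator, tracking the sign of the inequality through the location of $\alpha, \beta$ relative to $1$. Monotonicity in $\sigma$ requires checking that the exponent of $\sigma$ appearing in each definition falls in $[-1,1]$ for the specified $\alpha$-range: for $D_\alpha^*$ with $\alpha \geq 1/2$ the exponent $(1-\alpha)/(2\alpha)$ lies in $[-1/2,1/2]$, so $x \mapsto x^{(1-\alpha)/(2\alpha)}$ is operator monotone or anti-monotone, and combining this with monotonicity of $A \mapsto \tr A^\alpha$ on positive operators (operator monotonicity of $x^\alpha$ for $\alpha \in [0,1]$, Weyl majorization for $\alpha \geq 1$) gives the claim. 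The $\flat$ and Petz cases are analogous, with the admissible $\alpha$-ranges determined by where the relevant functional-calculus exponent ceases to be operator (anti-)monotone.

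Item (iii) is the analytic heart of the proposition, so I would prove it from scratch via the Gibbs variational principle
\begin{equation*}
\log \tr \exp(H) = \sup_{\tau \in \mathcal{S}(\mathcal{H})}\bigl\{ \tr[\tau H] - \tr[\tau \log\tau] \bigr\},
\end{equation*}
applied to $H = (\ln 2)\bigl(\alpha \log\rho + (1-\alpha)\log\sigma\bigr)$. Rearranging $\tr[\tau H] - \tr[\tau\log\tau]$ into Umegaki terms yields
\begin{equation*}
\log Q_\alpha^\flat(\rho\|\sigma) = \sup_\tau \bigl\{ -\alpha D(\tau\|\rho) - (1-\alpha) D(\tau\|\sigma) \bigr\},
\end{equation*}
and dividing by $\alpha - 1$ flips the supremum to a maximum for $\alpha > 1$ and to a minimum for $\alpha \in (0,1)$, producing exactly the stated formulas. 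The support constraint $\tau \in \mathcal{S}_\rho$ in the $\alpha > 1$ branch is automatic, since otherwise $D(\tau\|\rho) = +\infty$ drives the functional to $-\infty$ and cannot be a maximizer.

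Items (iv) and (v) then fall out cleanly. For (iv), data processing for $D_\alpha^\flat$ on $[0,1]$ is immediate from the variational formula in (iii), since every Umegaki term is monotone under CPTP maps; for $D_\alpha$ on $[0,2]$ one appeals to Lieb's concavity (equivalently, Ando's convexity) of $(A,B) \mapsto \tr A^p B^{1-p}$; for $D_\alpha^*$ on $[1/2,\infty)$ I would cite the Frank--Lieb argument via the Araki--Lieb--Thirring inequality. For (v), the lower bound is just data processing applied to the CPTP pinching map, using $\mathcal{E}_\sigma(\sigma) = \sigma$; the upper bound substitutes the pinching inequality $\rho \leq v(\sigma)\mathcal{E}_\sigma(\rho)$ into the sandwiched formula and extracts the resulting scalar via $\tr(\cdot)^\alpha$-monotonicity, using the monotonicity in $\sigma$ from (ii) to arrange the prefactor so that the additive constant becomes $2\log v(\sigma)$ uniformly in $\alpha$. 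Across the whole proposition, the main obstacle is really (iii): the variational identity is what links the three R\'enyi families and will later drive the paper's large-deviation analysis.
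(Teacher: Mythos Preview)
The paper does not supply a proof of Proposition~\ref{prop:mainpro}; it is stated purely as a list of known facts, each item carrying citations to Mosonyi--Ogawa, M\"uller-Lennert et al., Beigi, Frank--Lieb, Hayashi--Tomamichel, etc., and no argument is given in the text. So there is no ``paper's proof'' to compare against; your sketch is effectively a reconstruction of the cited literature, and for items (i)--(iv) it matches those sources closely (Jensen/operator-monotonicity for (i)--(ii), the Gibbs variational principle for (iii), Lieb concavity and the Frank--Lieb argument for (iv)).

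The one place where your outline is thinner than the cited proofs is the upper bound in (v). Substituting $\rho\le v(\sigma)\,\mathcal{E}_\sigma(\rho)$ directly into $Q_\alpha^*$ and using trace monotonicity of $x\mapsto x^\alpha$ yields
\[
D_\alpha^*(\rho\|\sigma)\le D_\alpha^*(\mathcal{E}_\sigma(\rho)\|\sigma)+\frac{\alpha}{|\alpha-1|}\log v(\sigma),
\]
whose prefactor blows up as $\alpha\to 1$; invoking ``monotonicity in $\sigma$'' does not repair this, since $\sigma$ is unchanged in the comparison. The uniform constant $2$ in the stated bound requires an additional step---in the cited references one typically routes through the Petz divergence via the Araki--Lieb--Thirring inequality, or bounds the gap by $D_{\max}(\rho\|\mathcal{E}_\sigma(\rho))\le\log v(\sigma)$ together with a second application of pinching---rather than the single substitution you describe. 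This is a genuine (if minor) gap in your sketch of (v), though it is irrelevant for the paper's later use of the bound, where only $\tfrac{1}{m}\log v(\sigma^{\otimes m})\to 0$ is needed.
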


\section{Problem formulation and main result}
\label{sec:main}

Let $(\rho_1,\sigma_1)$ and $(\rho_2,\sigma_2)$ be two quantum dichotomies. In the procedure of quantum dichotomies transformation, we use a quantum channel $\mathcal{E}$ to transform $(\rho_1,\sigma_1)$ into 
as much $(\rho_2,\sigma_2)$ as possible. We allow a small error in the transformation $\rho_1 \rightarrow \rho_2$, but the transformation $\sigma_1 \rightarrow \sigma_2$ is required to be exact. For $n \in \mathbb{N}$ and $\epsilon \in [0,1]$, we define the maximal achievable quantum dichotomies transformation number within error $\epsilon$ for $n$ copies of $(\rho_1,\sigma_1)$ as 
\begin{align}
&M^{\Delta, \epsilon}_{\rho_1,\sigma_1 \rightarrow \rho_2, \sigma_2}(n)\nonumber\\
&\quad:=\max\{m~:~ \exists~\text{a quantum channel}~\mathcal{E}~\text{such that}~\mathcal{E}(\sigma^{\ox n}_1)=\sigma^{\ox m}_2, \Delta(\mathcal{E}(\rho^{\ox n}_1),\rho^{\ox m}_2)\leq \epsilon\},
\end{align}
where $\Delta$ can be chosen as the trace distance $d$ or the purified distance $P$.

In the asymptotic setting in which multiple copies of $(\rho_1,\sigma_1)$ are available, the first order asymptotic behavior of $M^{\Delta, \epsilon}_{\rho_1,\sigma_1 \rightarrow \rho_2, \sigma_2}(n)$  has been derived in~\cite{FDT2019information, WangWilde2019resource} , i.e.,
\begin{equation}
\label{equ:defm}
\lim_{n \rightarrow \infty} \frac{1}{n} M^{\Delta, \epsilon}_{\rho_1,\sigma_1 \rightarrow \rho_2, \sigma_2}(n)=\frac{D(\rho_1\|\sigma_1)}{D(\rho_2\|\sigma_2)},
\end{equation}
where $\Delta$ can be either $d$ or $P$.

Eq.~(\ref{equ:defm}) implies that when transformation rate $r>\frac{D(\rho_1\|\sigma_1)}{D(\rho_2\|\sigma_2)}$, the optimal error among all transformation schemes 
\[
\epsilon^{\Delta,r}_{\rho_1,\sigma_1 \rightarrow \rho_2, \sigma_2}(n):=\min\{\Delta(\mathcal{E}_n(\rho^{\ox n}_1),\rho_2^{\ox nr})~:~\exists~\text{a quantum channel}~\mathcal{E}_n~\text{such that}~\mathcal{E}_n(\sigma^{\ox n}_1)=\sigma^{\ox nr}_2  \}
\]
converge to $1$ exponentially fast. The exact exponential decay rate is called as the strong converse exponent of quantum dichotomies and formally defined as follows.

\begin{definition}
For two quantum dichotomies $(\rho_1, \sigma_1)$, $(\rho_2, \sigma_2)$ and a transformation rate $r$, the strong converse exponents of quantum dichotomies based on the purified distance and the trace distance are defined respectively as
\begin{align*}
\begin{split}
E^P_{sc}(\rho_1, \sigma_1, \rho_2, \sigma_2, r):&=\lim_{n \rightarrow \infty} \frac{-1}{n} \log \big(1-\epsilon^{P,r}_{\rho_1,\sigma_1 \rightarrow \rho_2, \sigma_2}(n)\big),\\
E^d_{sc}(\rho_1, \sigma_1, \rho_2, \sigma_2, r):&=\lim_{n \rightarrow \infty} \frac{-1}{n} \log \big(1-\epsilon^{d,r}_{\rho_1,\sigma_1 \rightarrow \rho_2, \sigma_2}(n)\big).
\end{split}
\end{align*}
\end{definition}

\begin{remark}
\label{rem:equde}
By the relation between the purified distance and the fidelity function, $E^P_{sc}(\rho_1, \sigma_1, \rho_2, \sigma_2, r)$ can be equivalently expressed as 
\begin{equation}
E^P_{sc}(\rho_1, \sigma_1, \rho_2, \sigma_2, r)=\lim_{n \rightarrow \infty} \frac{-1}{n} \log \max_{\mathcal{E}_n: \mathcal{E}_n(\sigma_1^{\ox n})=\sigma_2^{\ox nr}} F^2(\mathcal{E}_n(\rho_1^{\ox n}), \rho_2^{\ox nr}).
\end{equation}
\end{remark}

Wang and Wilde have established a lower bound for $E^P_{sc}(\rho_1, \sigma_1, \rho_2, \sigma_2, r)$ in~\cite[Proposition~1]{WangWilde2019resource}, following their discussion in Eq.~(J33)--(J35) and (L11)--(J12). In particular, the following lower bound can be derived
\begin{equation}
\label{equ:mark}
E^P_{sc}(\rho_1, \sigma_1, \rho_2, \sigma_2, r) \geq \sup_{\frac{1}{2}\leq \alpha \leq 1} \frac{1-\alpha}{\alpha} \big\{rD_{\alpha}^*(\rho_2 \| \sigma_2)-D_{\frac{\alpha}{2\alpha-1}}^*(\rho_1\|\sigma_1)  \big\}.
\end{equation}

In this paper, we prove that the lower bound in Eq.~(\ref{equ:mark}) is also achievable, thus establishing the exact expression for $E^P_{sc}(\rho_1, \sigma_1, \rho_2, \sigma_2, r)$ and if $\rho_2$ is a pure state, we also determine the exact expression for $E^d_{sc}(\rho_1, \sigma_1, \rho_2, \sigma_2, r)$.  Our main results are as follows.

\begin{theorem}
For any $\rho_1, \sigma_1 \in \mc{S}(\mc{H}_1)$, $\rho_2, \sigma_2 \in \mc{S}(\mc{H}_2)$ that satisfy $\supp(\rho_1) \subseteq \supp(\sigma_1)$ or $\supp(\rho_2) \subseteq \supp(\sigma_2)$  and any $r>0$, we have
\begin{equation}
\label{equ:thm}
E^P_{sc}(\rho_1, \sigma_1, \rho_2, \sigma_2, r)=\sup_{\frac{1}{2}\leq \alpha \leq 1} \frac{1-\alpha}{\alpha} \big\{rD_{\alpha}^*(\rho_2 \| \sigma_2)-D_{\frac{\alpha}{2\alpha-1}}^*(\rho_1\|\sigma_1)  \big\}.
\end{equation}
In addition, if $\rho_2$ is a pure state, we have
\begin{equation}
\label{equ:strdis}
E^d_{sc}(\rho_1, \sigma_1, \rho_2, \sigma_2, r)=\sup_{\beta \geq 1} \frac{\beta-1}{\beta} \big\{rD_{\frac{1}{\beta}}(\rho_2 \| \sigma_2)-D^*_{\beta}(\rho_1 \| \sigma_1)   \big\}.
\end{equation}
\end{theorem}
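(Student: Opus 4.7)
The proof splits into Eq.~(\ref{equ:thm}) and then Eq.~(\ref{equ:strdis}). For Eq.~(\ref{equ:thm}), the ``$\geq$'' direction is Eq.~(\ref{equ:mark}) from Wang--Wilde, so the main work is the matching achievability ``$\leq$''. For Eq.~(\ref{equ:strdis}) I would reduce to Eq.~(\ref{equ:thm}) by exploiting a sharper Fuchs--van de Graaf estimate that is available when $\rho_2$ is pure.

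For the converse of Eq.~(\ref{equ:thm}) I would give a self-contained operator H\"older derivation. Fix $\alpha\in[\tfrac{1}{2},1]$, let $\alpha'=\alpha/(2\alpha-1)\in[1,+\infty]$ so that $1/\alpha+1/\alpha'=2$, and set $\omega_n:=\mc{E}_n(\rho_1^{\ox n})$ for any admissible $\mc{E}_n$. H\"older with exponents $(2\alpha',2\alpha)$ applied to the factorization
\[\sqrt{\omega_n}\sqrt{\rho_2^{\ox nr}}\;=\;\bigl(\sqrt{\omega_n}\,(\sigma_2^{\ox nr})^{(1-\alpha')/(2\alpha')}\bigr)\bigl((\sigma_2^{\ox nr})^{(1-\alpha)/(2\alpha)}\sqrt{\rho_2^{\ox nr}}\bigr)\]
yields $F^2(\omega_n,\rho_2^{\ox nr})\leq Q_{\alpha'}^*(\omega_n\|\sigma_2^{\ox nr})^{1/\alpha'}\,Q_\alpha^*(\rho_2\|\sigma_2)^{nr/\alpha}$. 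The data-processing inequality (Proposition~\ref{prop:mainpro}(iv)), valid at $\alpha'\geq 1/2$, together with the exact condition $\mc{E}_n(\sigma_1^{\ox n})=\sigma_2^{\ox nr}$, bounds the first factor by $Q_{\alpha'}^*(\rho_1\|\sigma_1)^{n/\alpha'}$. Using the algebraic identity $(\alpha'-1)/\alpha'=(1-\alpha)/\alpha$ and converting to divergences gives, uniformly in $\mc{E}_n$,
\[-\log F^2\bigl(\mc{E}_n(\rho_1^{\ox n}),\rho_2^{\ox nr}\bigr)\;\geq\;n\,\frac{1-\alpha}{\alpha}\Bigl\{rD_\alpha^*(\rho_2\|\sigma_2)-D^*_{\alpha/(2\alpha-1)}(\rho_1\|\sigma_1)\Bigr\}.\]
Optimizing in $\alpha$ and invoking Remark~\ref{rem:equde} produces the converse.

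For the achievability, the plan is to adapt the two-stage pinching technique of Mosonyi--Ogawa~\cite{MosonyiOgawa2017strong} that one of the authors used in \cite{LiYao2024operational}. \emph{Stage one} builds a channel realizing a weaker bound in which each $D^*$ in Eq.~(\ref{equ:thm}) is replaced by its log-Euclidean counterpart $D^\flat$: after pinching $\rho_1^{\ox n}$ with respect to $\sigma_1^{\ox n}$, the problem on the input side becomes classical-classical over type classes, and the variational representation in Proposition~\ref{prop:mainpro}(iii) allows one to write down a $\sigma$-preserving stochastic map achieving the log-Euclidean rate; the pinching overhead $2\log v(\sigma_1^{\ox n})=O(\log n)$ is subexponential. \emph{Stage two} boosts $D^\flat$ back up to $D^*$ by applying the same construction on tensor-power blocks of length $k=k(n)\to\infty$, $k/n\to 0$, and invoking the asymptotic $\tfrac{1}{k}D^\flat_\alpha\!\bigl(\mc{E}_{\sigma^{\ox k}}(\rho^{\ox k})\|\sigma^{\ox k}\bigr)\to D^*_\alpha(\rho\|\sigma)$, which follows from additivity of $D^*_\alpha$ under tensor products together with Proposition~\ref{prop:mainpro}(v) and the commutativity of the pinched state with $\sigma^{\ox k}$. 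The principal obstacle here is that the exact constraint $\mc{E}_n(\sigma_1^{\ox n})=\sigma_2^{\ox nr}$ must be preserved under both pinching and blocking; I expect this to be arranged by only ever modifying the protocol on the $\sigma$-preserving subspaces isolated by the pinching channels.

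For Eq.~(\ref{equ:strdis}) I reduce to Eq.~(\ref{equ:thm}). Taking $M=\rho_2^{\ox nr}$ in the Holevo--Helstrom variational formula for the trace distance and using that $\rho_2^{\ox nr}$ is a projector (since $\rho_2$ is pure) shows $d(\mu,\rho_2^{\ox nr})\geq 1-F^2(\mu,\rho_2^{\ox nr})=1-P(\mu,\rho_2^{\ox nr})^2$ for every state $\mu$; combined with the universal bound $d\leq P$ this gives, for every admissible channel,
\[1-P\bigl(\mc{E}_n(\rho_1^{\ox n}),\rho_2^{\ox nr}\bigr)\;\leq\;1-d\bigl(\mc{E}_n(\rho_1^{\ox n}),\rho_2^{\ox nr}\bigr)\;\leq\;2\bigl(1-P\bigl(\mc{E}_n(\rho_1^{\ox n}),\rho_2^{\ox nr}\bigr)\bigr),\]
whence $E^d_{sc}=E^P_{sc}$. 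It remains to rewrite Eq.~(\ref{equ:thm}) in the form of Eq.~(\ref{equ:strdis}): the substitution $\alpha=\beta/(2\beta-1)$ bijects $[\tfrac{1}{2},1]$ with $[1,+\infty)$, gives $(1-\alpha)/\alpha=(\beta-1)/\beta$ and $\alpha/(2\alpha-1)=\beta$, and a direct computation from Definition~\ref{definition:sand} for $\rho_2=\proj{\psi}$ yields $D^*_\alpha(\rho_2\|\sigma_2)=-\tfrac{\alpha}{1-\alpha}\log\bra{\psi}\sigma_2^{(1-\alpha)/\alpha}\ket{\psi}=D_{(2\alpha-1)/\alpha}(\rho_2\|\sigma_2)=D_{1/\beta}(\rho_2\|\sigma_2)$. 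Substituting these identifications into Eq.~(\ref{equ:thm}) converts it into Eq.~(\ref{equ:strdis}).
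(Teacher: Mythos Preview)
Your converse argument and the reduction of Eq.~(\ref{equ:strdis}) to Eq.~(\ref{equ:thm}) are correct and coincide with what the paper does (operator H\"older plus data processing for the former; the improved Fuchs--van de Graaf inequality $P^2\leq d$ for pure $\rho_2$, together with $d\leq P$, for the latter). Your two-stage template for the achievability---first a log-Euclidean bound, then boost to the sandwiched one by pinching on blocks---is also the paper's overall scheme.

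The genuine gap is in your Stage one. You write that after pinching the input the problem ``becomes classical-classical over type classes'' and that the variational formula ``allows one to write down a $\sigma$-preserving stochastic map achieving the log-Euclidean rate''. Neither of these claims does the work you need. Pinching only the input side leaves the output pair $(\rho_2,\sigma_2)$ fully quantum, so the problem is not classical; and the variational representation in Proposition~\ref{prop:mainpro}(iii) is an identity for a number, not a recipe for a channel. The paper's Stage one is entirely different and does not rely on any classicality: it first rewrites the log-Euclidean expression $F$ via Sion's theorem as an infimum over auxiliary states $(\tau_1,\tau_2)\in\mc{S}_{\rho_1}\times\mc{S}_{\rho_2}$ of $rD(\tau_2\|\rho_2)+D(\tau_1\|\rho_1)+|rD(\tau_2\|\sigma_2)-D(\tau_1\|\sigma_1)|^+$, and then splits into two regimes. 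When $r<D(\tau_1\|\sigma_1)/D(\tau_2\|\sigma_2)$ it \emph{invokes the first-order asymptotic theorem} to produce $\sigma$-preserving channels approximately sending $\tau_1^{\otimes n}\to\tau_2^{\otimes nr}$, and then controls the fidelity with $\rho$'s via the key inequality $-\log F^2(\rho,\sigma)\leq D(\tau\|\rho)+D(\tau\|\sigma)$ together with continuity estimates (Fannes--Audenaert, operator H\"older on $\log\rho_2^n$ after a smoothing of $\rho_2^{\otimes nr}$). When $r\geq D(\tau_1\|\sigma_1)/D(\tau_2\|\sigma_2)$ the channel is built by running the previous construction at the smaller rate $r_1$ and tensoring on $\sigma_2^{\otimes n(r-r_1)}$. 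None of these ingredients---the $(\tau_1,\tau_2)$-variational formula with the positive part, the appeal to the first-order result, the fidelity--relative-entropy lemma, or the case split with the $\sigma_2$-padding trick---appears in your sketch, and without them Stage one does not go through.

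A smaller point on Stage two: the paper pinches on \emph{both} sides, setting $\rho_1^m=\mc{E}_{\sigma_1^{\otimes m}}(\rho_1^{\otimes m})$ and $\rho_2^m=\mc{E}_{\sigma_2^{\otimes m}}(\rho_2^{\otimes m})$, applies Stage one to these commuting pairs, and uses that $D^\flat=D^*$ in the commuting case together with Proposition~\ref{prop:mainpro}(v) (for the $\rho_1$ side) and an extra pinching $\mc{E}_{\sigma_2^{\otimes m}}^{\otimes kr}$ on the output (costing $r\log v(\sigma_2^{\otimes m})$ via the fidelity--pinching lemma). Your description puts all the pinching on the input side and speaks of a blocking length $k(n)$ growing with $n$; in the paper $m$ is fixed, Stage one is rerun at block length $m$, and only at the very end does one send $m\to\infty$.
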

\begin{remark}
By Lemma~\ref{lem:rela} in Appendix, if $\rho_2$ is a pure state, we have
\begin{equation}
\label{equ:rela}
\epsilon^{d,r}_{\rho_1,\sigma_1 \rightarrow \rho_2, \sigma_2}(n)\leq \epsilon^{P,r}_{\rho_1,\sigma_1 \rightarrow \rho_2, \sigma_2}(n)
\leq \sqrt{\epsilon^{d,r}_{\rho_1,\sigma_1 \rightarrow \rho_2, \sigma_2}(n)}.
\end{equation}
Eq.~(\ref{equ:rela}) implies that 
\begin{equation}
\label{equ:relputra}
1-\epsilon^{d,r}_{\rho_1,\sigma_1 \rightarrow \rho_2, \sigma_2}(n)
\geq 1-\epsilon^{P,r}_{\rho_1,\sigma_1 \rightarrow \rho_2, \sigma_2}(n)
\geq \frac{1-\epsilon^{d,r}_{\rho_1,\sigma_1 \rightarrow \rho_2, \sigma_2}(n)} {1+\sqrt{\epsilon^{d,r}_{\rho_1,\sigma_1 \rightarrow \rho_2, \sigma_2}(n)}}.
\end{equation}
If Eq.~(\ref{equ:thm}) holds, we can obtain from Eq.~(\ref{equ:relputra})
\begin{equation}
\begin{split}
&E^d_{sc}(\rho_1, \sigma_1, \rho_2, \sigma_2, r) \\
=&E^P_{sc}(\rho_1, \sigma_1, \rho_2, \sigma_2, r) \\
=&\sup_{\frac{1}{2}\leq  \alpha \leq 1} \frac{1-\alpha}{\alpha} \big\{rD_{\alpha}^{*}(\rho_2 \|\sigma_2)-D_{\frac{\alpha}{2\alpha-1}}^{*}(\rho_1 \| \sigma_1)\big\} \\
=&\sup_{\frac{1}{2}\leq  \alpha \leq 1} \frac{1-\alpha}{\alpha} \big\{rD_{2-\frac{1}{\alpha}}(\rho_2 \|\sigma_2)-D_{\frac{\alpha}{2\alpha-1}}^{*}(\rho_1 \| \sigma_1)\big\} \\
=&\sup_{\beta \geq 1} \frac{\beta-1}{\beta}\big\{rD_{\frac{1}{\beta}}(\rho_2 \|\sigma_2)-D_\beta^{*}(\rho_1 \| \sigma_1)\big\},
\end{split}
\end{equation}
where the fourth line is because that when $\rho_2$ is a pure state, $D_{\alpha}^*(\rho_2\|\sigma_2)=D_{2-\frac{1}{\alpha}}(\rho_2 \| \sigma_2)$. Hence, in the following, we only need to 
establish Eq.~(\ref{equ:thm}).
\end{remark}

\section{Proof of the achievability part}
\label{sec:achi}

In this section, we focus on the proof of the achievability part of Eq.~(\ref{equ:thm}). The process is divided into three steps. In the first step, we construct an intermediate quantity $F(\rho_1, \sigma_1, \rho_2, \sigma_2, r)$ and derive its 
variational expression. In the second step, we prove that $F(\rho_1, \sigma_1, \rho_2, \sigma_2, r)$ is an upper bound of $E^P_{sc}(\rho_1, \sigma_1, \rho_2, \sigma_2, r)$. In the last step, we improve this upper bound to the correct form.

\begin{definition}
Let $\rho_1, \sigma_1 \in \mc{S}(\mc{H}_1)$, $\rho_2, \sigma_2 \in \mc{S}(\mc{H}_2)$ and $r \in \mathbb{R}$. The quantity $F(\rho_1, \sigma_1, \rho_2, \sigma_2, r)$ is defined as
\begin{equation}
F(\rho_1, \sigma_1, \rho_2, \sigma_2, r):=\sup_{\frac{1}{2}<\alpha < 1} \frac{1-\alpha}{\alpha} \left\{rD_{\alpha}^{\flat}(\rho_2 \|\sigma_2)-D_\beta^{\flat}(\rho_1 \| \sigma_1)\right\},
\end{equation}
where $\frac{1}{\alpha}+\frac{1}{\beta}=2$.
\end{definition}

In the next proposition, we establish a variational expression for $F(\rho_1, \sigma_1, \rho_2, \sigma_2, r)$.
\begin{proposition}
\label{pro:var}
For $\rho_1, \sigma_1 \in \mc{S}(\mc{H}_1)$ that satisfy $\supp(\rho_1) \subset \supp(\sigma_1)$, $\rho_2, \sigma_2 \in \mc{S}(\mc{H}_2)$ and $r \geq 0$, we have
\[
F(\rho_1, \sigma_1, \rho_2, \sigma_2, r)
=\inf_{(\tau_1,\tau_2) \in \mc{S}_{\rho_1} \times \mc{S}_{\rho_2}} \left\{rD(\tau_2 \| \rho_2)+D(\tau_1\|\rho_1)+|rD(\tau_2 \| \sigma_2)-D(\tau_1 \| \sigma_1)|^+\right\}.
\]
\end{proposition}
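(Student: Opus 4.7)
The plan is to substitute the variational representation from Proposition~\ref{prop:mainpro}(iii) into the definition of $F$, exchange the outer supremum with the inner minimum via Sion's minimax theorem, and then explicitly evaluate the remaining one-dimensional supremum. For $\alpha \in (1/2,1)$ we have $\beta = \alpha/(2\alpha-1) \in (1,\infty)$, and a direct calculation gives the useful identities $-\alpha/(\alpha-1) = \alpha/(1-\alpha)$ and $\beta/(\beta-1) = \alpha/(1-\alpha)$. Applying Proposition~\ref{prop:mainpro}(iii) to both divergences yields
\[
D_\alpha^\flat(\rho_2\|\sigma_2) = \min_{\tau_2} \Bigl[ D(\tau_2\|\sigma_2) + \tfrac{\alpha}{1-\alpha} D(\tau_2\|\rho_2) \Bigr], \qquad
D_\beta^\flat(\rho_1\|\sigma_1) = \max_{\tau_1 \in \mc{S}_{\rho_1}} \Bigl[ D(\tau_1\|\sigma_1) - \tfrac{\alpha}{1-\alpha} D(\tau_1\|\rho_1) \Bigr].
\]
Substituting, multiplying through by $(1-\alpha)/\alpha$, and introducing $c := (1-\alpha)/\alpha \in (0,1)$ turns $F$ into the sup-min
\[
F = \sup_{c \in (0,1)} \min_{\tau_1 \in \mc{S}_{\rho_1},\, \tau_2 \in \mc{S}_{\rho_2}} \Phi_c(\tau_1,\tau_2), \qquad \Phi_c = rD(\tau_2\|\rho_2) + D(\tau_1\|\rho_1) + c\bigl[rD(\tau_2\|\sigma_2) - D(\tau_1\|\sigma_1)\bigr].
\]

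The central step is to exchange the order of $\sup_c$ and $\min_{\tau_1,\tau_2}$. The function $\Phi_c$ is affine, hence concave and continuous, in $c$. For the convexity in $(\tau_1,\tau_2)$ at fixed $c\in(0,1)$, the $\tau_2$-terms are manifestly convex since $r$ and $cr$ are non-negative. The subtler $\tau_1$-dependence is handled by the algebraic identity
\[
D(\tau_1\|\rho_1) - c\,D(\tau_1\|\sigma_1) = (1-c)\tr(\tau_1 \log \tau_1) - \tr\bigl(\tau_1 (\log \rho_1 - c \log \sigma_1)\bigr),
\]
which is convex in $\tau_1$ because $\tr(\tau\log\tau)$ is convex and $1-c>0$. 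Together with compactness and convexity of the $(\tau_1,\tau_2)$-domain, Sion's minimax theorem then justifies
\[
F = \min_{\tau_1 \in \mc{S}_{\rho_1},\, \tau_2 \in \mc{S}_{\rho_2}} \sup_{c \in (0,1)} \Phi_c(\tau_1,\tau_2).
\]

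For the final step, fix $(\tau_1,\tau_2)$; $\Phi_c$ is affine in $c$ with slope $rD(\tau_2\|\sigma_2) - D(\tau_1\|\sigma_1)$, so $\sup_{c\in(0,1)}\Phi_c = rD(\tau_2\|\rho_2) + D(\tau_1\|\rho_1) + |rD(\tau_2\|\sigma_2) - D(\tau_1\|\sigma_1)|^+$, which matches the claimed expression. The main obstacle is the rigorous application of Sion's theorem given that $D(\tau\|\cdot)$ may equal $+\infty$ on the boundary. I plan to handle this by using the assumption $\supp(\rho_1)\subset\supp(\sigma_1)$ to deduce $\mc{S}_{\rho_1}\subset \mc{S}_{\sigma_1}$, so that all $\tau_1$-terms are finite and continuous on the compact convex set $\mc{S}_{\rho_1}$; for $\tau_2$, any $\tau_2\notin\mc{S}_{\sigma_2}$ forces $\Phi_c=+\infty$ and is discarded by the infimum, so the effective domain reduces to the compact convex set $\mc{S}_{\rho_2}\cap\mc{S}_{\sigma_2}$ (assumed nonempty, otherwise both sides are trivially $+\infty$) where $\Phi_c$ is continuous and the minimax argument goes through unchanged.
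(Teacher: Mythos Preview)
Your proposal is correct and follows essentially the same route as the paper: substitute the variational representation of Proposition~\ref{prop:mainpro}(iii), reparametrize $\alpha\in(1/2,1)$ by $c=(1-\alpha)/\alpha\in(0,1)$ (the paper calls it $\delta$), invoke Sion's minimax theorem after verifying convexity in $(\tau_1,\tau_2)$ via the same entropy identity $D(\tau_1\|\rho_1)-cD(\tau_1\|\sigma_1)=(1-c)\tr(\tau_1\log\tau_1)-\tr\tau_1(\log\rho_1-c\log\sigma_1)$, and evaluate the inner affine supremum to obtain the positive part. Your additional remarks on finiteness and the effective $\tau_2$-domain $\mc{S}_{\rho_2}\cap\mc{S}_{\sigma_2}$ make the minimax step a bit more rigorous than the paper's presentation, but the argument is otherwise identical.
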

\begin{proof}
By Proposition~\ref{prop:mainpro}(\romannumeral3), we have
\begin{equation}
\begin{split}
&\sup_{\frac{1}{2}< \alpha < 1} \frac{1-\alpha}{\alpha} \{rD_{\alpha}^{\flat}(\rho_2 \|\sigma_2)-D_\beta^{\flat}(\rho_1 \| \sigma_1) \} \\
=& \sup_{\frac{1}{2}< \alpha < 1} \frac{1-\alpha}{\alpha} \big\{r\inf_{\tau_2 \in \mc{S}_{\rho_2}} \big(D(\tau_2\|\sigma_2)-\frac{\alpha}{\alpha-1}D(\tau_2 \| \rho_2)\big)-\sup_{\tau_1 \in \mc{S}_{\rho_1}}\big(D(\tau_1\|\sigma_1)-\frac{\beta}{\beta-1}D(\tau_1 \| \rho_1) \big) \big\} \\
=& \sup_{\frac{1}{2}< \alpha < 1}\inf_{(\tau_1,\tau_2) \in \mc{S}_{\rho_1} \times \mc{S}_{\rho_2}} \frac{1-\alpha}{\alpha} \{rD(\tau_2\|\sigma_2)-\frac{\alpha}{\alpha-1}rD(\tau_2 \| \rho_2)-D(\tau_1\|\sigma_1)+\frac{\beta}{\beta-1}D(\tau_1 \| \rho_1)  \} \\
=&\sup_{\frac{1}{2}< \alpha < 1}\inf_{(\tau_1,\tau_2) \in \mc{S}_{\rho_1} \times \mc{S}_{\rho_2}} \big\{ rD(\tau_2\|\rho_2)+D(\tau_1 \| \rho_1)+ \frac{1-\alpha}{\alpha}\big(rD(\tau_2\|\sigma_2)-D(\tau_1\|\sigma_1) \big)   \big\} \\
=& \sup_{0<\delta<1}\inf_{(\tau_1,\tau_2) \in \mc{S}_{\rho_1} \times \mc{S}_{\rho_2}} \big\{ rD(\tau_2\|\rho_2)+D(\tau_1 \| \rho_1)+ \delta\big(rD(\tau_2\|\sigma_2)-D(\tau_1\|\sigma_1) \big)   \big\} \\
=&\inf_{(\tau_1,\tau_2) \in \mc{S}_{\rho_1} \times \mc{S}_{\rho_2}}\sup_{0<\delta<1} \big\{ rD(\tau_2\|\rho_2)+D(\tau_1 \| \rho_1)+ \delta\big(rD(\tau_2\|\sigma_2)-D(\tau_1\|\sigma_1) \big)   \big\} \\
=&\inf_{(\tau_1,\tau_2) \in \mc{S}_{\rho_1} \times \mc{S}_{\rho_2}} \big\{rD(\tau_2 \| \rho_2)+D(\tau_1\|\rho_1)+|rD(\tau_2 \| \sigma_2)-D(\tau_1 \| \sigma_1)|^+ \big\},
\end{split}
\end{equation}
where the sixth line follows from Sion's minimax theorem. The convexity of the function $(\tau_1, \tau_2) \mapsto rD(\tau_2\|\rho_2)+D(\tau_1 \| \rho_1)+ \delta\big(rD(\tau_2\|\sigma_2)-D(\tau_1\|\sigma_1)\big) $ that the Sion's minimax theorem requires is not obvious. To see this, we write
\begin{align}
\begin{split}
&rD(\tau_2\|\rho_2)+D(\tau_1 \| \rho_1)+ \delta\big(rD(\tau_2\|\sigma_2)-D(\tau_1\|\sigma_1)\big) \\
=&rD(\tau_2\|\rho_2)+\delta rD(\tau_2\|\sigma_2)-(1-\delta)H(\tau_1)+\tr \tau_1(\delta\log \sigma_1 -\log \rho_1).
\end{split}
\end{align}
Because $rD(\tau_2\|\rho_2)+\delta rD(\tau_2\|\sigma_2)$, $-(1-\delta)H(\tau_1)$ and $\tr \tau_1(\delta\log \sigma_1 -\log \rho_1)$ are all convex function of $(\tau_1, \tau_2)$, the result follows.
\end{proof}

Next, we show that $F(\rho_1, \sigma_1, \rho_2, \sigma_2, r)$ is an upper bound for $E^P_{sc}(\rho_1, \sigma_1, \rho_2, \sigma_2, r)$.
\begin{lemma}
\label{lem:main}
For $\rho_1, \sigma_1 \in \mc{S}(\mc{H}_1)$ that satisfy $\supp(\rho_1) \subset \supp(\sigma_1)$, $\rho_2, \sigma_2 \in \mc{S}(\mc{H}_2)$ and $r \geq 0$, we have
\[E^P_{sc}(\rho_1, \sigma_1, \rho_2, \sigma_2, r) \leq F(\rho_1, \sigma_1, \rho_2, \sigma_2, r).\]
\end{lemma}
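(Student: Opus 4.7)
The plan is to use the variational representation in Proposition~\ref{pro:var} to reduce the claim to constructing, for every pair $(\tau_1,\tau_2) \in \mathcal{S}_{\rho_1} \times \mathcal{S}_{\rho_2}$, a sequence of channels $\mathcal{E}_n$ satisfying the hard constraint $\mathcal{E}_n(\sigma_1^{\otimes n}) = \sigma_2^{\otimes nr}$ and whose output fidelity can be lower bounded as
\begin{equation*}
-\frac{1}{n}\log F^{2}\big(\mathcal{E}_n(\rho_1^{\otimes n}),\rho_2^{\otimes nr}\big) \leq rD(\tau_2\|\rho_2) + D(\tau_1\|\rho_1) + |rD(\tau_2\|\sigma_2) - D(\tau_1\|\sigma_1)|^{+}+o(1).
\end{equation*}
By Remark~\ref{rem:equde} the left-hand side equals $E^P_{sc}$ in the limit, and taking the infimum over $(\tau_1,\tau_2)$ on the right-hand side produces exactly $F(\rho_1,\sigma_1,\rho_2,\sigma_2,r)$ by Proposition~\ref{pro:var}, which yields the lemma.

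First I would reduce to the case where $[\rho_i,\sigma_i]=0$ for $i=1,2$. Replacing $\rho_i^{\otimes n}$ by its pinched version $\mathcal{E}_{\sigma_i^{\otimes n}}(\rho_i^{\otimes n})$ costs only the polynomial factor $v(\sigma_i^{\otimes n})$ by the pinching inequality, which is absorbed into the $o(1)$ after taking $-\tfrac{1}{n}\log$. Since the pinched states commute with $\sigma_i$, the relevant Umegaki relative entropies reduce to classical Kullback--Leibler divergences of the joint spectra, and the argument reduces to classical dichotomy transformation between the associated probability vectors.

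In this commuting setting, I would build $\mathcal{E}_n$ via a type-class decomposition guided by $\tau_1$ and $\tau_2$. The $\tau_1$-typical class of sequences carries $\rho_1$-mass approximately $2^{-nD(\tau_1\|\rho_1)}$ and $\sigma_1$-mass approximately $2^{-nD(\tau_1\|\sigma_1)}$, with analogous estimates for $\tau_2$ on the output side at rate $r$. A typicality-based stochastic map routing $\tau_1$-typical inputs to $\tau_2$-typical outputs then produces a fidelity contribution of order $2^{-nrD(\tau_2\|\rho_2)/2 - nD(\tau_1\|\rho_1)/2}$, matching the square root of the first two terms in the target bound. The positive-part correction accounts for the mismatch $|rD(\tau_2\|\sigma_2) - D(\tau_1\|\sigma_1)|^{+}$ between the $\sigma_1$-mass of the typical input class and the $\sigma_2$-mass of the typical output class, which bounds how much $\sigma$-weight the typical branch can absorb while respecting the exact equality $\mathcal{E}_n(\sigma_1^{\otimes n})=\sigma_2^{\otimes nr}$.

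The main obstacle is precisely this exact equality. I would write $\mathcal{E}_n = \lambda_n \mathcal{T}_n + (1-\lambda_n)\mathcal{R}_n$, where $\mathcal{T}_n$ is the typicality-based transformation and $\mathcal{R}_n$ is a corrective channel absorbing the residual $\sigma$-mass so that the constraint holds exactly. The delicate step is to choose $\lambda_n$ large enough that $\mathcal{T}_n$ still contributes fidelity at least of order $2^{-nrD(\tau_2\|\rho_2) - nD(\tau_1\|\rho_1)}$, but small enough that $\lambda_n \mathcal{T}_n(\sigma_1^{\otimes n}) \leq \sigma_2^{\otimes nr}$, so that $\mathcal{R}_n$ can be completed into a valid channel. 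The optimal choice is governed by the $\sigma$-mass ratio between the two typical classes and, after taking logarithms, produces the positive-part term, closing the argument.
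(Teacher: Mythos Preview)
Your proposal diverges from the paper's route and, as written, contains a real gap in the pinching reduction.

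\textbf{The gap.} You claim that pinching $\rho_i^{\otimes n}$ by $\sigma_i^{\otimes n}$ ``reduces to the case $[\rho_i,\sigma_i]=0$''. It does not: the pinched state $\mathcal{E}_{\sigma_i^{\otimes n}}(\rho_i^{\otimes n})$ commutes with $\sigma_i^{\otimes n}$, but it is \emph{not} a tensor power of a single-copy state commuting with $\sigma_i$. So what you land on is a one-shot classical dichotomy problem on $\mathcal{H}_i^{\otimes n}$ with a highly non-i.i.d.\ ``$\rho$''-distribution. Your subsequent type-class argument, however, is phrased in terms of single-letter auxiliary states $\tau_1,\tau_2$ and i.i.d.\ typicality estimates such as ``$\tau_1$-typical class has $\rho_1$-mass $\approx 2^{-nD(\tau_1\|\rho_1)}$''. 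These estimates presuppose that the input $\rho$-distribution is $\rho_1^{\otimes n}$ in a fixed product basis, which is exactly what you have destroyed by pinching. Moreover, the variational $\tau_i\in\mathcal{S}_{\rho_i}$ in Proposition~\ref{pro:var} are general quantum states that need not commute with $\sigma_i$ either, so there is no common basis in which your ``$\tau_i$-typical class'' even makes sense. The positive-part bookkeeping and the convex-combination patch $\lambda_n\mathcal{T}_n+(1-\lambda_n)\mathcal{R}_n$ are therefore floating on an undefined construction.

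\textbf{What the paper does instead.} The paper proves Lemma~\ref{lem:main} directly for general quantum states, with no pinching at all. It splits the variational infimum of Proposition~\ref{pro:var} into the two regions $r<D(\tau_1\|\sigma_1)/D(\tau_2\|\sigma_2)$ and $r\ge D(\tau_1\|\sigma_1)/D(\tau_2\|\sigma_2)$, giving quantities $F_1$ and $F_2$. For $F_1$ (Lemma~\ref{lem:F1}) the key observation is that $r$ is strictly below the first-order rate for the dichotomy $(\tau_1,\sigma_1)\to(\tau_2,\sigma_2)$, so one can invoke the known first-order achievability as a black box to obtain channels with $\mathcal{E}_n(\sigma_1^{\otimes n})=\sigma_2^{\otimes nr}$ and $\mathcal{E}_n(\tau_1^{\otimes n})\to\tau_2^{\otimes nr}$. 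The fidelity to $\rho_2^{\otimes nr}$ is then controlled using the elementary bound $-\log F^2(\rho,\sigma)\le D(\tau\|\rho)+D(\tau\|\sigma)$ (Lemma~\ref{lem:fidelity-re}) with $\tau=\mathcal{E}_n(\tau_1^{\otimes n})$, together with continuity of the relative entropy. For $F_2$ (Lemma~\ref{lem:F2}) the same channels are used at the reduced rate $r_1=D(\tau_1\|\sigma_1)/D(\tau_2\|\sigma_2)-\delta$ and the remaining $n(r-r_1)$ output copies are filled in by appending $\sigma_2^{\otimes n(r-r_1)}$; multiplicativity of fidelity plus one more application of Lemma~\ref{lem:fidelity-re} produces exactly the positive-part term. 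Pinching enters only \emph{after} Lemma~\ref{lem:main}, at a fixed block size $m$, to upgrade the $D^\flat$-bound to the $D^*$-bound; this is precisely the step where the paper exploits that Lemma~\ref{lem:main} has already been established for arbitrary (non-i.i.d., non-commuting) inputs, so that it can be applied to the $m$-pinched states. Your proposal collapses these two stages into one, and the collapse is where it breaks.
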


In order to prove Lemma~\ref{lem:main}, we introduce
\begin{equation}
\begin{split}
&F_1(\rho_1, \sigma_1, \rho_2, \sigma_2, r):= \inf_{(\tau_1, \tau_2)\in \mc{F}_1}  \left\{rD(\tau_2\|\rho_2)+D(\tau_1 \|\rho_1)\right\} \\
&F_2(\rho_1, \sigma_1, \rho_2, \sigma_2, r):=\inf_{(\tau_1, \tau_2)\in \mc{F}_2} \left\{rD(\tau_2\|\rho_2)+D(\tau_1 \|\rho_1)+rD(\tau_2\|\sigma_2)-D(\tau_1\|\sigma_1)\right\},
\end{split}
\end{equation}
with
\begin{align*}
\begin{split}
&\mc{F}_1:=\left\{(\tau_1, \tau_2)~|~(\tau_1, \tau_2)\in \mc{S}_{\rho_1} \times \mc{S}_{\rho_2}, r<\frac{D(\tau_1 \|\sigma_1)}{D(\tau_2\|\sigma_2)} \right\},   \\
&\mc{F}_2:=\left\{(\tau_1, \tau_2)~|~(\tau_1, \tau_2)\in \mc{S}_{\rho_1} \times \mc{S}_{\rho_2}, r\geq \frac{D(\tau_1 \|\sigma_1)}{D(\tau_2\|\sigma_2)} \right\}.
\end{split}
\end{align*}
It is obviously seen from Proposition~\ref{pro:var} that
\begin{equation}
F(\rho_1, \sigma_1, \rho_2, \sigma_2, r)=\min\{F_1(\rho_1, \sigma_1, \rho_2, \sigma_2, r), F_2(\rho_1, \sigma_1, \rho_2, \sigma_2, r)  \}.
\end{equation}
So it is suffices to prove that $F_1(\rho_1, \sigma_1, \rho_2, \sigma_2, r)$ and $F_2(\rho_1, \sigma_1, \rho_2, \sigma_2, r)$ are both the upper bounds for $E^P_{sc}(\rho_1, \sigma_1, \rho_2, \sigma_2, r)$. We 
accomplish the proof in the following Lemma~\ref{lem:F1}  and Lemma~\ref{lem:F2} , respectively.

\begin{lemma}
\label{lem:F1}
For  $\rho_1, \sigma_1 \in \mc{S}(\mc{H}_1)$ that satisfy $\supp(\rho_1) \subset \supp(\sigma_1)$, $\rho_2, \sigma_2 \in \mc{S}(\mc{H}_2)$ and $r \geq 0$, we have
\[
E^P_{sc}(\rho_1, \sigma_1, \rho_2, \sigma_2, r) \leq F_1(\rho_1, \sigma_1, \rho_2, \sigma_2, r).
\]
\end{lemma}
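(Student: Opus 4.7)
\medskip

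\noindent\textbf{Proof plan.} I would fix an admissible auxiliary pair $(\tau_1,\tau_2)\in\mathcal{F}_1$, exhibit a sequence of channels whose output fidelity to $\rho_2^{\otimes\lceil nr\rceil}$ decays no faster than $2^{-n\bigl(D(\tau_1\|\rho_1)+rD(\tau_2\|\rho_2)\bigr)}$, and then infimize over $\mathcal{F}_1$.

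First, I fix $(\tau_1,\tau_2)\in\mathcal{F}_1$, i.e., $rD(\tau_2\|\sigma_2)<D(\tau_1\|\sigma_1)$. This says that the rate $r$ lies strictly below the first-order optimal rate of the \emph{auxiliary} dichotomy conversion $(\tau_1,\sigma_1)\to(\tau_2,\sigma_2)$, so the direct part of~\eqref{equ:defm} (with $\Delta=P$) yields a sequence of channels $\mathcal{E}_n$ with $\mathcal{E}_n(\sigma_1^{\otimes n})=\sigma_2^{\otimes\lceil nr\rceil}$ and $F^2\bigl(\mathcal{E}_n(\tau_1^{\otimes n}),\tau_2^{\otimes\lceil nr\rceil}\bigr)\to 1$ as $n\to\infty$.

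Next I would lower bound $F^2\bigl(\mathcal{E}_n(\rho_1^{\otimes n}),\rho_2^{\otimes\lceil nr\rceil}\bigr)$ in terms of $D(\tau_1\|\rho_1)$ and $rD(\tau_2\|\rho_2)$. The geometric content is that on the typical subspace of $\tau_i^{\otimes n_i}$ (with $n_1=n$, $n_2=\lceil nr\rceil$), the state $\rho_i^{\otimes n_i}$ differs from $\tau_i^{\otimes n_i}$ by a factor of order $2^{-n_iD(\tau_i\|\rho_i)}$. To phrase this as an operator inequality in the non-commutative case, I would first pinch $\rho_i^{\otimes n_i}$ by $\tau_i^{\otimes n_i}$, absorbing the non-commutativity into a polynomial-in-$n$ prefactor via Proposition~\ref{prop:mainpro}(v) and the pinching inequality, and then invoke classical-style typical-subspace estimates of the form
\[
\Pi_i\,\rho_i^{\otimes n_i}\,\Pi_i\ \geq\ 2^{-n_iD(\tau_i\|\rho_i)-o(n_i)}\,\Pi_i\,\tau_i^{\otimes n_i}\,\Pi_i
\]
for suitable typical projectors $\Pi_i$ satisfying $\tr[\Pi_i\tau_i^{\otimes n_i}]\to 1$.

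Finally, combining these sandwich estimates with the vanishing-error guarantee on $\tau_1^{\otimes n}$, together with operator monotonicity of the square root (applied inside $F(\rho,\sigma)=\|\sqrt{\rho}\sqrt{\sigma}\|_1$), the target bound
\[
F^2\bigl(\mathcal{E}_n(\rho_1^{\otimes n}),\rho_2^{\otimes\lceil nr\rceil}\bigr)\ \geq\ 2^{-n\bigl(D(\tau_1\|\rho_1)+rD(\tau_2\|\rho_2)\bigr)-o(n)}
\]
follows; taking $-\tfrac{1}{n}\log(\cdot)$, passing to $n\to\infty$, and then infimizing over $(\tau_1,\tau_2)\in\mathcal{F}_1$ gives $E^P_{sc}(\rho_1,\sigma_1,\rho_2,\sigma_2,r)\leq F_1(\rho_1,\sigma_1,\rho_2,\sigma_2,r)$.

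The hardest step will be this last combining argument. A naive chaining ``$\mathcal{E}_n(\rho_1^{\otimes n})\approx\mathcal{E}_n(\tau_1^{\otimes n})\approx\tau_2^{\otimes\lceil nr\rceil}\approx\rho_2^{\otimes\lceil nr\rceil}$'' via the purified-distance triangle inequality is hopeless, since each approximation link is only of order one while the factors $2^{-n_iD(\tau_i\|\rho_i)}$ one wants to preserve are exponentially small. This is precisely where the Mosonyi--Ogawa pinching technique becomes essential: after pinching, the relevant operators commute, the typical-subspace sandwiches tensorize cleanly, and one can push the estimate through $\mathcal{E}_n$ at the correct exponent without losing the constant $1$ at any step. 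The resulting log-Euclidean-style bound is the weaker intermediate bound mentioned in the introduction, which is subsequently improved to the sandwiched R\'enyi form stated in~\eqref{equ:thm}.
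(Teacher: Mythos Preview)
Your overall outline (fix $(\tau_1,\tau_2)\in\mathcal{F}_1$, use the first-order direct part for $(\tau_1,\sigma_1)\to(\tau_2,\sigma_2)$ to get channels $\mathcal{E}_n$, then bound the fidelity exponent by $D(\tau_1\|\rho_1)+rD(\tau_2\|\rho_2)$) matches the paper, but the mechanism you propose for the fidelity bound is different from the paper's and, as written, does not close.

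The paper does \emph{not} use pinching or typical subspaces in this lemma. Instead it applies the single inequality of Lemma~\ref{lem:fidelity-re},
\[
-\log F^2(\rho,\sigma)\ \leq\ D(\tau\|\rho)+D(\tau\|\sigma),
\]
with $\tau=\mathcal{E}_n(\tau_1^{\otimes n})$, $\rho=\mathcal{E}_n(\rho_1^{\otimes n})$, and $\sigma=\rho_2^n$ a full-rank perturbation of $\rho_2^{\otimes nr}$. The first term is $\leq nD(\tau_1\|\rho_1)$ by data processing; the second is controlled by a continuity argument (Fannes--Audenaert plus an operator H\"older bound on $\tr[(\tau_2^{\otimes nr}-\mathcal{E}_n(\tau_1^{\otimes n}))\log\rho_2^n]$), which is exactly why the perturbation $\rho_2^n$ is introduced---to keep $\|\log\rho_2^n\|_\infty$ linear in $n$ despite possible support mismatch between $\mathcal{E}_n(\tau_1^{\otimes n})$ and $\rho_2^{\otimes nr}$. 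One then passes from $\rho_2^n$ back to $\rho_2^{\otimes nr}$ via Lemma~\ref{lem:rea}. No pinching appears until the \emph{later} improvement step, and there the pinching is by $\sigma_i^{\otimes m}$, not by $\tau_i$.

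Your proposed route has a concrete obstruction at the ``combining'' step you already flag as hardest. To run a typical-subspace sandwich you want $\rho_1^{\otimes n}$ and $\tau_1^{\otimes n}$ to commute, so you pinch $\rho_1^{\otimes n}$ by $\tau_1^{\otimes n}$. But the pinching inequality gives $\rho_1^{\otimes n}\leq v(\tau_1^{\otimes n})\,\mathcal{E}_{\tau_1^{\otimes n}}(\rho_1^{\otimes n})$, which is the wrong direction for the operator \emph{lower} bound you need to push through $\mathcal{E}_n$ and the fidelity. Absorbing the pinching into the channel, i.e.\ working with $\mathcal{E}_n\circ\mathcal{E}_{\tau_1^{\otimes n}}$, breaks the exactness constraint $\mathcal{E}_n(\sigma_1^{\otimes n})=\sigma_2^{\otimes nr}$ since $\sigma_1$ need not commute with $\tau_1$. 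The Mosonyi--Ogawa trick survives this precisely because it pinches by $\sigma_i^{\otimes m}$ (so the constraint is preserved), and it is applied \emph{after} Lemma~\ref{lem:main}, not inside the proof of Lemma~\ref{lem:F1}. So the appeal to that technique here does not rescue the argument; you would still need a fully quantum substitute for the input-side sandwich that respects the $\sigma_1$-constraint, and none is provided.
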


\begin{proof}
According to the definition of $F_1(\rho_1, \sigma_1, \rho_2, \sigma_2, r)$, for any $\delta>0$, there exists $(\tau_1, \tau_2) \in \mc{S}_{\rho_1} \times \mc{S}_{\rho_2}$ such that
\begin{equation}
\label{F1:con}
r<\frac{D(\tau_1 \| \sigma_1)}{D(\tau_2 \| \sigma_2)}
\end{equation}
and 
\begin{equation}
rD(\tau_2 \| \rho_2)+D(\tau_1 \| \rho_1) \leq F_1(\rho_1, \sigma_1, \rho_2, \sigma_2, r) + \delta.
\end{equation}
Eq.~(\ref{F1:con}) implies that $r$ is an asymptotically achievable rate for the transformation from $(\tau_1, \sigma_1)$ to $(\tau_2, \sigma_2)$. Hence, there exists a sequence of quantum channels $\{\mc{E}_n\}_{n \in \mathbb{N}}$ such that
\begin{equation}
\begin{split}
&\mc{E}_n(\sigma_1^{\ox n})=\sigma_2^{\ox nr}, \\
\lim_{n \rightarrow \infty} &\|\mc{E}_n(\tau_1^{\ox n})-\tau_2^{\ox nr} \|_1=0.
\end{split}
\end{equation}

Now, we let $t=4(rD(\tau_2 \| \rho_2)+D(\tau_1 \| \rho_1))$, $\epsilon_n=2^{-nt}$, $C=\max\{1-r\log\lambda_{\rm{min}}(\rho_2), t+r\log |\mathcal{H}_2| \}$, where $\lambda_{\rm{min}}(\rho_2)$ is the smallest non-zero eigenvalue of $\rho_2$ and $\rho_2^n=(1-\epsilon_n)\rho_2^{\ox nr}+\epsilon_n \pi_n$, where 
$\pi_n$ is the maximally mixed state onto the subspace $\supp(\rho_2^{\ox nr})^{\perp}$. For sufficiently large $n$, $\lambda_{\rm{min}}(\rho_2^n) \geq 2^{-nC}$ and for such $n$, $F(\rho_2^n, \mc{E}_n(\rho_1^{\ox n}))$ can be evaluated as follows.
\begin{equation}
\label{F1:eva}
\begin{split}
&-\log F^2(\rho_2^n, \mc{E}_n(\rho_1^{\ox n})) \\
\leq &D(\mc{E}_n(\tau_1^{\ox n}) \| \mc{E}_n(\rho_1^{\ox n}))+D(\mc{E}_n(\tau_1^{\ox n}) \| \rho_2^n) \\
\leq &nD(\tau_1 \| \rho_1)+D(\mc{E}_n(\tau_1^{\ox n}) \| \rho_2^n) \\
=&nD(\tau_1 \| \rho_1)-H(\mc{E}_n(\tau_1^{\ox n}))-\tr\mc{E}_n(\tau_1^{\ox n})\log\rho_2^n+H(\tau_2^{\ox nr})+\tr\tau_2^{\ox nr}\log \rho_2^n+D(\tau_2^{\ox nr} \| \rho_2^n)\\
=&nD(\tau_1 \| \rho_1)-H(\mc{E}_n(\tau_1^{\ox n}))+H(\tau_2^{\ox nr})+\tr\big(\tau_2^{\ox nr}-\mc{E}_n(\tau_1^{\ox n})\big)\log\rho_2^n+D(\tau_2^{\ox nr} \| \rho_2^n),
\end{split}
\end{equation}
where the second line comes from Lemma~\ref{lem:fidelity-re} in Appendix, the third line is due to the data processing inequality. By making use of Fannes-Audenaert inequality and operator H\"{o}lder inequality, we can further upper bound Eq.~(\ref{F1:eva}) as 
\begin{equation}
\label{F1:eva11}
\begin{split}
&-\log F^2(\rho_2^n, \mc{E}_n(\rho_1^{\ox n})) \\
\leq &nD(\tau_1 \| \rho_1)+\frac{1}{2}\|\mc{E}_n(\tau_1^{\ox n})-\tau_2^{\ox nr} \|_1\log |\mathcal{H}_2|^{nr}+h(\frac{1}{2}\|\mc{E}_n(\tau_1^{\ox n})-\tau_2^{\ox nr} \|_1)\\
&+\|\mc{E}_n(\tau_1^{\ox n})-\tau_2^{\ox nr} \|_1 \|\log \rho_2^n\|_\infty+D(\tau_2^{\ox nr} \| \rho_2^n) \\
\leq &nD(\tau_1 \| \rho_1)+\frac{1}{2}\|\mc{E}_n(\tau_1^{\ox n})-\tau_2^{\ox nr} \|_1\log |\mathcal{H}_2|^{nr}+h(\frac{1}{2}\|\mc{E}_n(\tau_1^{\ox n})-\tau_2^{\ox nr} \|_1)\\
&+nC\|\mc{E}_n(\tau_1^{\ox n})-\tau_2^{\ox nr} \|_1+nrD(\tau_2 \| \rho_2)-\log(1-\epsilon_n).
\end{split}
\end{equation}
Dividing $n$ in both sides of Eq.~(\ref{F1:eva11}) and taking the limit of $n$, we have
\begin{equation}
\label{evafia}
\lim_{n \rightarrow \infty} -\frac{\log F^2(\rho_2^n, \mc{E}_n(\rho_1^{\ox n}))}{n} \leq D(\tau_1 \| \rho_1)+rD(\tau_2 \| \rho_2).
\end{equation}

By Lemma~\ref{lem:rea} in Appendix, the following inequality holds:
\begin{equation}
\label{equ:fin}
\begin{split}
F^2( \mc{E}_n(\rho_1^{\ox n}), \rho_2^{\ox nr})&\geq  F^2(\rho_2^n, \mc{E}_n(\rho_1^{\ox n}))-\sqrt{\|\rho_2^n-\rho_2^{\ox nr}\|_1 }\\
&=  F^2(\rho_2^n, \mc{E}_n(\rho_1^{\ox n}))-\sqrt{2\epsilon_n}.
\end{split}
\end{equation}
Eq.~(\ref{evafia}), Eq.~(\ref{equ:fin}) and the definition of $\epsilon_n$ give that
\begin{equation}
\label{equ:intla}
\lim_{n \rightarrow \infty} -\frac{\log F^2( \mc{E}_n(\rho_1^{\ox n}), \rho_2^{\ox nr})}{n} \leq D(\tau_1 \| \rho_1)+rD(\tau_2 \| \rho_2) \leq F_1(\rho_1, \sigma_1, \rho_2, \sigma_2, r)+\delta.
\end{equation}
Eq.~(\ref{equ:intla}) and Remark~\ref{rem:equde} imply that
\begin{equation}
\label{equ:F1la}
E^P_{sc}(\rho_1, \sigma_1, \rho_2, \sigma_2, r) \leq F_1(\rho_1, \sigma_1, \rho_2, \sigma_2, r) +\delta.
\end{equation}
Because Eq.~(\ref{equ:F1la}) holds for any $\delta>0$, by letting $\delta \rightarrow 0$, we complete the proof.
\end{proof}

\begin{lemma}
\label{lem:F2}
For  $\rho_1, \sigma_1 \in \mc{S}(\mc{H}_1)$ that satisfy $\supp(\rho_1) \subset \supp(\sigma_1)$, $\rho_2, \sigma_2 \in \mc{S}(\mc{H}_2)$ and $r \geq 0$, we have
\[
E^P_{sc}(\rho_1, \sigma_1, \rho_2, \sigma_2, r) \leq F_2(\rho_1, \sigma_1, \rho_2, \sigma_2, r).
\]
\end{lemma}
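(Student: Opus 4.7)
The plan is to prove Lemma~\ref{lem:F2} by a padding construction that reduces it to Lemma~\ref{lem:F1} applied at a smaller rate, together with a comparison between log-Euclidean and sandwiched R\'enyi divergences that bounds the resulting exponent by the $F_2$ integrand.

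Concretely, I would fix $(\tau_1,\tau_2)\in\mc{F}_2$ with $D(\tau_2\|\sigma_2)>0$ (the degenerate case $D(\tau_2\|\sigma_2)=0$ forces $\tau_1=\sigma_1$ and $\tau_2=\sigma_2$ and is handled separately by the depolarizing channel $X\mapsto\tr(X)\,\sigma_2^{\ox nr}$). Set $r_0:=D(\tau_1\|\sigma_1)/D(\tau_2\|\sigma_2)\leq r$. For any small $\epsilon>0$ the pair $(\tau_1,\tau_2)$ is admissible for the infimum defining $F_1$ at rate $r_0-\epsilon$, so Lemma~\ref{lem:F1} yields channels $\mc{E}_n^{(1)}$ with $\mc{E}_n^{(1)}(\sigma_1^{\ox n})=\sigma_2^{\ox n(r_0-\epsilon)}$ and
\[
\limsup_{n\to\infty}\frac{-1}{n}\log F^2\bigl(\mc{E}_n^{(1)}(\rho_1^{\ox n}),\rho_2^{\ox n(r_0-\epsilon)}\bigr)\leq D(\tau_1\|\rho_1)+(r_0-\epsilon)D(\tau_2\|\rho_2).
\]
I would then take the padded channel $\mc{E}_n(X):=\mc{E}_n^{(1)}(X)\ox\sigma_2^{\ox n(r-r_0+\epsilon)}$, which is a valid CPTP map with $\mc{E}_n(\sigma_1^{\ox n})=\sigma_2^{\ox nr}$. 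By multiplicativity of fidelity on tensor products,
\[
F^2\bigl(\mc{E}_n(\rho_1^{\ox n}),\rho_2^{\ox nr}\bigr)=F^2\bigl(\mc{E}_n^{(1)}(\rho_1^{\ox n}),\rho_2^{\ox n(r_0-\epsilon)}\bigr)\cdot F^2(\sigma_2,\rho_2)^{n(r-r_0+\epsilon)},
\]
so passing to $n\to\infty$ and then $\epsilon\to 0$, and using $-\log F^2(\sigma_2,\rho_2)=D_{1/2}^*(\sigma_2\|\rho_2)$, yields the intermediate bound
\[
E^P_{sc}(\rho_1,\sigma_1,\rho_2,\sigma_2,r)\leq B(\tau_1,\tau_2):=D(\tau_1\|\rho_1)+r_0 D(\tau_2\|\rho_2)+(r-r_0)D_{1/2}^*(\sigma_2\|\rho_2).
\]

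The remaining step is to show that $B(\tau_1,\tau_2)$ is dominated by the $F_2$ integrand at $(\tau_1,\tau_2)$. Using $D(\tau_1\|\sigma_1)=r_0 D(\tau_2\|\sigma_2)$ the difference simplifies to $(r-r_0)\bigl[D(\tau_2\|\rho_2)+D(\tau_2\|\sigma_2)-D_{1/2}^*(\sigma_2\|\rho_2)\bigr]$, and since $r\geq r_0$ it suffices to prove the bracket is non-negative. I would obtain this by combining the $\alpha=1/2$ case of Proposition~\ref{prop:mainpro}(iii), which reads $D_{1/2}^{\flat}(\rho_2\|\sigma_2)=\min_\tau\{D(\tau\|\sigma_2)+D(\tau\|\rho_2)\}$ and hence bounds the sum above by $D_{1/2}^{\flat}(\rho_2\|\sigma_2)$ evaluated at $\tau=\tau_2$, with the standard R\'enyi ordering $D_{1/2}^{\flat}\geq D_{1/2}^*$ (via Golden-Thompson and Araki-Lieb-Thirring) and the symmetry $D_{1/2}^*(\rho_2\|\sigma_2)=D_{1/2}^*(\sigma_2\|\rho_2)$. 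Taking infimum over $(\tau_1,\tau_2)\in\mc{F}_2$ then delivers $E^P_{sc}\leq F_2$. The main obstacle is this last comparison: the padding construction naturally produces the sandwiched quantity $D_{1/2}^*(\sigma_2\|\rho_2)$, whereas $F_2$ is expressed in terms of Umegaki divergences, so the log-Euclidean $D_{1/2}^{\flat}$ is needed as a bridge---its variational formula links it to $D(\tau_2\|\sigma_2)+D(\tau_2\|\rho_2)$ on one side, and the classical R\'enyi ordering links it back to $D_{1/2}^*$ on the other. A minor technicality is that $n(r_0-\epsilon)$ and $n(r-r_0+\epsilon)$ must be rounded to integers, which costs only $o(n)$ in the exponent and is absorbed by sending $\epsilon\to 0$ last.
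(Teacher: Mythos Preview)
Your proposal is correct and follows essentially the same route as the paper: both invoke the construction from the proof of Lemma~\ref{lem:F1} at the reduced rate $r_0-\epsilon$ (the paper calls it $r_1$), pad the resulting channel with $\sigma_2^{\ox n(r-r_0+\epsilon)}$, and then bound the extra term $-\log F^2(\rho_2,\sigma_2)$ by $D(\tau_2\|\rho_2)+D(\tau_2\|\sigma_2)$. The only difference is that the paper obtains this last inequality by directly citing Lemma~\ref{lem:fidelity-re}, whereas you rederive the same bound via the $\alpha=1/2$ variational formula for $D_{1/2}^{\flat}$ together with the ordering $D_{1/2}^{\flat}\geq D_{1/2}^*$ and the symmetry of $D_{1/2}^*$---which is in fact one way to prove Lemma~\ref{lem:fidelity-re}.
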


\begin{proof}
According to the definition of $F_2(\rho_1, \sigma_1, \rho_2, \sigma_2, r)$, there exists $(\tau_1,\tau_2) \in \mc{S}_{\rho_1} \times \mc{S}_{\rho_2}$  such that
\begin{equation}
r \geq \frac{D(\tau_1 \| \sigma_1)}{D(\tau_2 \| \sigma_2)}
\end{equation}
and
\begin{equation}
\label{equ:f2}
F_2(\rho_1, \sigma_1, \rho_2, \sigma_2, r)=rD(\tau_2\| \rho_2)+D(\tau_1 \| \rho_1)+rD(\tau_2 \| \sigma_2)-D(\tau_1 \| \sigma_1).
\end{equation}

We let $r_1=\frac{D(\tau_1 \| \sigma_1)}{D(\tau_2 \| \sigma_2)}-\delta$, where $\delta>0$ is a constant. From the proof of Lemma~\ref{lem:F1}, we know that there exists a sequence of quantum channels $\{\mc{E}_n \}_{n \in \mathbb{N}}$ such that
\begin{equation}
\mc{E}_n(\sigma_1^{\ox n})=\sigma_2^{\ox nr_1}
\end{equation}
and
\begin{equation}
\lim_{n \rightarrow \infty} -\frac{\log F^2(\rho_2^{\ox nr_1}, \mc{E}_n(\rho_1^{\ox n}))}{n} \leq D(\tau_1 \| \rho_1)+r_1D(\tau_2 \| \rho_2).
\end{equation}

Now, we construct a sequence of quantum channels $\{\Phi_n\}_{n \in \mathbb{N}}$ as 
\[
\Phi_n(X)=\mc{E}_n(X) \ox \sigma_2^{\ox n(r-r_1)}
\]
It is easy to verify that
\begin{equation}
\label{equ:F21}
\Phi_n(\sigma_1^{\ox n})=\sigma_2^{\ox nr}
\end{equation}
and 
\begin{equation}
\label{equ:F22}
\begin{split}
&\lim_{n \rightarrow \infty} -\frac{\log F^2(\rho_2^{\ox nr}, \Phi_n(\rho_1^{\ox n}))}{n} \\
= &\lim_{n \rightarrow \infty}-\bigg\{ \frac{\log F^2(\rho_2^{ \ox nr_1},\mc{E}_n(\rho_1^{\ox n}))}{n}+ \frac{\log F^2(\rho_2^{\ox n(r-r_1)},\sigma_2^{\ox n(r-r_1)})}{n}   \bigg\} \\
\leq &D(\tau_1 \| \rho_1)+r_1D(\tau_2 \| \rho_2)-(r-r_1)\log F^2(\rho_2, \sigma_2) \\
\leq &D(\tau_1 \| \rho_1)+r_1D(\tau_2 \| \rho_2)+(r-r_1)\left\{D(\tau_2\| \rho_2)+D(\tau_2 \| \sigma_2)\right\}  \\
= &D(\tau_1 \| \rho_1)+rD(\tau_2 \| \rho_2)+rD(\tau_2 \| \sigma_2)-D(\tau_1\|\sigma_1)+\delta D(\tau_2 \| \sigma_2),
\end{split}
\end{equation}
where in second inequality, we use Lemma~\ref{lem:fidelity-re} in Appendix. Eq.~(\ref{equ:f2}), Eq.~(\ref{equ:F21}) and Eq.~(\ref{equ:F22}) imply that 
\begin{equation}
\label{equ:f2la}
E^P_{sc}(\rho_1, \sigma_1, \rho_2, \sigma_2, r) \leq F_2(\rho_1, \sigma_1, \rho_2, \sigma_2, r) +\delta D(\tau_2 \| \sigma_2).
\end{equation}
Noticing that Eq.~(\ref{equ:f2la}) holds for any $\delta>0$, by letting $\delta \rightarrow 0$,  we complete the proof.
\end{proof}

In the last step, we improve the the upper bound in Lemma~\ref{lem:main} to the correct form to finish the proof of the achievability part of Eq.~(\ref{equ:thm}).
\begin{proofof}{the achievability part of Eq.~(\ref{equ:thm})}
If $\supp(\rho_1) \subseteq \supp(\sigma_1)$ does not hold, it is obvious that all terms in Eq.~(\ref{equ:thm}) equal to $0$ and the assertion holds trivially. Hence, for the rest we assume that $\supp(\rho_1) \subseteq \supp(\sigma_1)$.
We first fix an integer $m$ and let $\rho_1^m=\mc{E}_{\sigma_1^{\ox m}}(\rho_1^{\ox m})$, $\rho_2^m=\mc{E}_{\sigma_2^{\ox m}}(\rho_2^{\ox m})$. From Lemma~\ref{lem:main} , we know that there exists a sequence of quantum channels $\{\Phi_k\}_{k \in \mathbb{N}}$ such that
\begin{equation}
\label{equ:2}
\Phi_k(\sigma_1^{\ox mk})=\sigma_2^{\ox mkr}
\end{equation}
and 
\begin{equation}
\label{equ:imin}
\lim_{k \rightarrow \infty} -\frac{\log F^2({\rho^m_2}^{\ox kr}, \Phi_k({\rho^m_1}^{\ox k}))}{k} \leq \sup_{\frac{1}{2}< \alpha < 1} \frac{1-\alpha}{\alpha} \big\{rD_{\alpha}^{\flat}(\rho^m_2 \|\sigma^{\ox m}_2)-D_\beta^{\flat}(\rho^m_1 \| \sigma^{\ox m}_1)\big\}.
\end{equation}
Lemma~\ref{lem:appen1} in Appendix and Eq.~(\ref{equ:imin}) imply that 
\begin{equation}
\label{equ:las}
\begin{split}
&\lim_{k \rightarrow \infty} -\frac{\log F^2({\rho_2}^{\ox mkr}, \mc{E}_{\sigma_2^{\ox m}}^{\ox kr} \circ \Phi_k({\rho^m_1}^{\ox k}))}{k} \\
\leq &\lim_{k \rightarrow \infty} -\frac{\log  v^{-kr}(\sigma_2^{\ox m})F^2({\rho^m_2}^{\ox kr}, \mc{E}_{\sigma_2^{\ox m}}^{\ox kr} \circ \Phi_k({\rho^m_1}^{\ox k}))}{k} \\
\leq &\lim_{k \rightarrow \infty} -\frac{\log  v^{-kr}(\sigma_2^{\ox m})F^2({\rho^m_2}^{\ox kr},\Phi_k({\rho^m_1}^{\ox k}))}{k} \\
\leq &\sup_{\frac{1}{2}< \alpha < 1} \frac{1-\alpha}{\alpha} \big\{rD_{\alpha}^{\flat}(\rho^m_2 \|\sigma^{\ox m}_2)-D_\beta^{\flat}(\rho^m_1 \| \sigma^{\ox m}_1)\big\}+r \log v(\sigma_2^{\ox m}).
\end{split}
\end{equation}

Now, for any integer $n$, we can write $n=km+l$, where $0 \leq l \leq m-1$ and we construct a quantum channel $\mc{E}_n$ as 
\[
\mc{E}_n(X)=\big(\mc{E}_{\sigma_2^{\ox m}}^{\ox kr} \circ \Phi_k  \circ \mc{E}_{\sigma_1^{\ox m}}^{\ox k} \circ \tr_l(X)\big)\ox \sigma_2^{\ox lr},
\]
where $\tr_l$ denotes taking the partial trace over the last $l$ systems.

From Eq.~(\ref{equ:2}) and Eq.~(\ref{equ:las}), it is easy to see that 
\begin{equation}
\label{equ:pin1}
\mc{E}_n(\sigma_1^{\ox n})=\sigma_2^{\ox nr}
\end{equation}
and 
\begin{equation}
\label{equ:pin2}
\begin{split}
&\lim_{n \rightarrow \infty} -\frac{\log F^2({\rho_2}^{\ox nr}, \mc{E}_n(\rho_1^{\ox n}))}{n}  \\
=&\lim_{k \rightarrow \infty} -\frac{\log F^2({\rho_2}^{\ox mkr},\mc{E}_{\sigma_2^{\ox m}}^{\ox kr} \circ \Phi_k({\rho^m_1}^{\ox k}))F^2(\rho_2^{\ox lr},\sigma_2^{\ox lr})}{mk+l}\\
\leq &\sup_{\frac{1}{2}< \alpha < 1} \frac{1-\alpha}{\alpha} \bigg\{r\frac{D_{\alpha}^{\flat}(\rho^m_2 \|\sigma^{\ox m}_2)}{m}-\frac{D_\beta^{\flat}(\rho^m_1 \| \sigma^{\ox m}_1)}{m}\bigg\}+r \frac{\log v(\sigma_2^{\ox m})}{m}.
\end{split}
\end{equation}
Eq.~(\ref{equ:pin1}) and Eq.~(\ref{equ:pin2}) give that
\begin{equation}
\label{equ:imlast}
\begin{split}
&E^P_{sc}(\rho_1, \sigma_1, \rho_2, \sigma_2, r) \\
\leq &\sup_{\frac{1}{2}< \alpha < 1} \frac{1-\alpha}{\alpha} \left\{r\frac{D_{\alpha}^{\flat}(\rho^m_2 \|\sigma^{\ox m}_2)}{m}-\frac{D_\beta^{\flat}(\rho^m_1 \| \sigma^{\ox m}_1)}{m}\right\}+r \frac{\log v(\sigma_2^{\ox m})}{m} \\
=&\sup_{\frac{1}{2}< \alpha < 1} \frac{1-\alpha}{\alpha} \left\{r\frac{D_{\alpha}^{*}(\rho^m_2 \|\sigma^{\ox m}_2)}{m}-\frac{D_\beta^{*}(\rho^m_1 \| \sigma^{\ox m}_1)}{m}\right\}+r \frac{\log v(\sigma_2^{\ox m})}{m}\\
\leq &\sup_{\frac{1}{2}< \alpha < 1} \frac{1-\alpha}{\alpha} \left\{rD_{\alpha}^{*}(\rho_2 \|\sigma_2)-D_\beta^{*}(\rho_1 \| \sigma_1)+2\frac{v(\sigma_1^{\ox m})}{m}\right\}+r \frac{\log v(\sigma_2^{\ox m})}{m}\\
\leq &\sup_{\frac{1}{2}\leq \alpha \leq 1} \frac{1-\alpha}{\alpha} \left\{rD_{\alpha}^{*}(\rho_2 \|\sigma_2)-D_\beta^{*}(\rho_1 \| \sigma_1)\right\}+2\frac{v(\sigma_1^{\ox m})}{m}+r \frac{\log v(\sigma_2^{\ox m})}{m} \\
= &\sup_{\frac{1}{2}\leq \alpha \leq 1} \frac{1-\alpha}{\alpha} \big\{rD_{\alpha}^{*}(\rho_2 \|\sigma_2)-D_{\frac{\alpha}{2\alpha-1}}^{*}(\rho_1 \| \sigma_1)\big\}+2\frac{v(\sigma_1^{\ox m})}{m}+r \frac{\log v(\sigma_2^{\ox m})}{m}, 
\end{split}
\end{equation}
where the third line is because that $\rho^m_1$ and $\rho^m_2$ commute with $\sigma_1^{\ox m}$ and $\sigma_2^{\ox m}$, respectively, the fourth line is due to the data processing inequality and Proposition~\ref{prop:mainpro} (\romannumeral5). Because Eq.~(\ref{equ:imlast}) holds 
for any $m$, by letting $m \rightarrow \infty$, we complete the proof.
\end{proofof}

\section{Discussion}
\label{sec:conclu}

In this work, we determine the exact strong converse exponent of quantum dichotomies based on the purified distance. The combination of this result and an improved Fuchs–van de Graaf inequality also directly leads to the strong converse exponent based on the trace distance when $\rho_2$ is a pure state. We further believe that this formula still holds when $\rho_2$ is a general quantum state, but leave it as an open question.  While recent results have established lower bounds for the strong converse exponents under the trace distance for some tasks~\cite{Wilde2022distinguishability, SalzmannDatta2022total, SGC2022optimal}, deriving a matching upper bound seems a difficult problem. The improved Fuchs–van de Graaf inequality provides a new idea to solve this kind of problem when dealing with pure states. In this case, the strong converse exponent under the trace distance can be transformed into that under the purified distance. For determining the strong converse exponent under the purified distance, there are then already a few techniques developed~\cite{MosonyiOgawa2017strong, LiYao2024operational}.

\section*{Acknowledgments}

The authors would like to thank Mark Wilde for comments. MB and YY acknowledge funding by the European Research Council (ERC Grant Agreement No. 948139) and MB acknowledges support from the Excellence Cluster - Matter and Light for Quantum Computing (ML4Q).

\section{Appendix}

The following lemma is an improved  Fuchs–van de Graaf inequality.

\begin{lemma}
\label{lem:rela}
For any pure state $\varphi$ and quantum state $\rho$, we have
\begin{equation}
P(\rho, \varphi) \leq \sqrt{d(\rho, \varphi)}.
\end{equation}
\end{lemma}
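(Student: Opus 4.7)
The plan is to exploit the fact that for a pure state $\varphi$ the fidelity simplifies dramatically, turning the squared purified distance into a simple expectation value.

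First I would write $\varphi = |\psi\rangle\langle\psi|$. Since $\varphi$ is rank one, the fidelity reduces to $F(\rho,\varphi) = \|\sqrt{\rho}\sqrt{\varphi}\|_1 = \sqrt{\langle\psi|\rho|\psi\rangle}$, so that
\[
P^{2}(\rho,\varphi) = 1 - F^{2}(\rho,\varphi) = 1 - \langle\psi|\rho|\psi\rangle = \tr\bigl[(I-\varphi)\rho\bigr].
\]
The squared purified distance is therefore exactly the expectation of the projector $I-\varphi$ in the state $\rho$.

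Next I would invoke the variational (dual) representation of the trace distance,
\[
d(\rho,\varphi) = \max_{0 \leq \Pi \leq I} \tr\bigl[\Pi(\rho-\varphi)\bigr],
\]
and make the specific feasible choice $\Pi = I - \varphi$. Because $\varphi$ is a projector, $\tr[(I-\varphi)\varphi] = \tr(\varphi) - \tr(\varphi^{2}) = 1 - 1 = 0$, and therefore
\[
d(\rho,\varphi) \;\geq\; \tr\bigl[(I-\varphi)(\rho-\varphi)\bigr] \;=\; \tr\bigl[(I-\varphi)\rho\bigr] \;=\; P^{2}(\rho,\varphi).
\]
Taking square roots yields $P(\rho,\varphi) \leq \sqrt{d(\rho,\varphi)}$, as claimed.

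There is essentially no obstacle here: the proof is one line once one recognizes that the projector $I-\varphi$ is the right witness in the dual formulation of the trace distance, and that this witness annihilates the pure state $\varphi$. The result is genuinely stronger than the usual Fuchs--van de Graaf bound $d(\rho,\sigma) \leq P(\rho,\sigma)$, and the purity of $\varphi$ is what makes this improvement possible (without purity, $\tr[(I-\varphi)\varphi]$ would not vanish).
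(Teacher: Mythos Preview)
Your proof is correct and essentially identical to the paper's: both exploit the variational formula for the trace distance together with the purity of $\varphi$, the only cosmetic difference being that the paper takes the witness $\Pi=\varphi$ applied to $\varphi-\rho$ whereas you take the complementary witness $\Pi=I-\varphi$ applied to $\rho-\varphi$, which yields the same bound since $\tr(\rho-\varphi)=0$.
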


\begin{proof}
 By direct calculation, we have
\begin{equation}
d(\rho, \varphi)=\tr(\varphi-\rho)_+ \geq  \tr(\varphi-\rho)\varphi=1-\tr\rho \varphi=1-F^2(\rho, \varphi)=P^2(\rho, \varphi).
\end{equation}
We are done.
\end{proof}

\begin{lemma}
\label{lem:rea}
For $\rho, \sigma, \tau \in \mc{S}(\mc{H})$, we have
\begin{equation}
F^2(\tau,\sigma) \geq F^2(\rho, \sigma) -\sqrt{\|\rho-\tau\|_1}.
\end{equation}
\end{lemma}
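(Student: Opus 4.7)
The plan is to prove the bound via Uhlmann's theorem, reducing the claim to a trace involving pure-state projectors, and then to exploit the rank-two structure of a difference of two pure-state projectors so as to avoid a spurious factor of two.

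First, I would fix an auxiliary system and a purification $\ket{\phi_\sigma}$ of $\sigma$. By Uhlmann's theorem I can choose a purification $\ket{\phi_\rho}$ of $\rho$ in the same space satisfying $|\langle\phi_\sigma|\phi_\rho\rangle|^2 = F^2(\rho,\sigma)$, and, applying Uhlmann's theorem once more with $\ket{\phi_\rho}$ now playing the role of reference, a purification $\ket{\phi_\tau}$ of $\tau$ satisfying $|\langle\phi_\rho|\phi_\tau\rangle|^2 = F^2(\rho,\tau)$. Since $F^2(\tau,\sigma) \geq |\langle\phi_\sigma|\phi_\tau\rangle|^2$ holds for every purification of $\tau$, this yields
\[
F^2(\rho,\sigma) - F^2(\tau,\sigma) \;\leq\; \tr\bigl[\proj{\phi_\sigma}\bigl(\proj{\phi_\rho} - \proj{\phi_\tau}\bigr)\bigr].
\]

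The key step is bounding the right-hand side. The operator $\Delta := \proj{\phi_\rho} - \proj{\phi_\tau}$ is a traceless Hermitian operator of rank at most two, with nonzero eigenvalues $\pm\sqrt{1 - |\langle\phi_\rho|\phi_\tau\rangle|^2} = \pm P(\rho,\tau)$. Writing $\Delta = P(\rho,\tau)(\ket{u}\bra{u} - \ket{v}\bra{v})$ for orthonormal eigenvectors $\ket{u},\ket{v}$, the trace becomes $P(\rho,\tau)\bigl(|\langle u|\phi_\sigma\rangle|^2 - |\langle v|\phi_\sigma\rangle|^2\bigr)$; orthonormality of $\ket{u}, \ket{v}$ forces $|\langle u|\phi_\sigma\rangle|^2 + |\langle v|\phi_\sigma\rangle|^2 \leq 1$, so the bracketed expression lies in $[-1,1]$ and the trace is bounded in absolute value by $P(\rho,\tau)$. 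The Fuchs--van de Graaf estimate $P^2(\rho,\tau) = (1-F(\rho,\tau))(1+F(\rho,\tau)) \leq 2d(\rho,\tau) = \|\rho-\tau\|_1$ then finishes the argument.

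The main subtlety is precisely the absence of a factor of two in the bound $|\tr[\proj{\phi_\sigma}\Delta]| \leq P(\rho,\tau)$. A direct operator-H\"older estimate gives only $\|\Delta\|_1 = 2P(\rho,\tau)$, and passing through the Lipschitz-type inequality $F(\rho,\sigma) - F(\tau,\sigma) \leq \|\sqrt{\rho}-\sqrt{\tau}\|_2 \leq \sqrt{\|\rho-\tau\|_1}$ and then squaring loses the same factor. Eliminating that loss requires simultaneously using both the rank-two structure of $\Delta$ and the fact that $\proj{\phi_\sigma}$ is itself a rank-one projector, which together force the two relevant diagonal entries of $\proj{\phi_\sigma}$ in the $\{\ket{u},\ket{v}\}$ directions to sum to at most one rather than to two.
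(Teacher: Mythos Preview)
Your proposal is correct and follows essentially the same route as the paper: both purify via Uhlmann so that $F(\rho,\sigma)=F(\phi_\rho,\phi_\sigma)$ and $F(\rho,\tau)=F(\phi_\rho,\phi_\tau)$, bound $\tr[\phi_\sigma(\phi_\rho-\phi_\tau)]$ by $P(\rho,\tau)$ using the rank-two structure of the pure-state difference, and finish with Fuchs--van de Graaf. The only difference is cosmetic: the paper phrases the key bound as $\tr[\phi_\sigma(\phi_\rho-\phi_\tau)]\leq\tr(\phi_\rho-\phi_\tau)_+=d(\phi_\rho,\phi_\tau)=P(\phi_\rho,\phi_\tau)$, which needs only $0\leq\phi_\sigma\leq I$ rather than the rank-one property you invoke.
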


\begin{proof}
By Ulhmann's theorem~\cite{Uhlmann1976transition}, there exists purified states $\phi_\rho$, $\phi_\sigma$ and $\phi_\tau$ of $\rho$, $\sigma$ and $\tau$ such that
\[
F(\rho,\sigma)=F(\phi_\rho, \phi_\sigma), F(\rho, \tau)=F(\phi_\rho, \phi_\tau).
\]
It is easy to see that 
\begin{equation}
\begin{split}
F^2(\tau, \sigma) &\geq F^2(\phi_\tau, \phi_\sigma) \\
&=\tr \phi_\tau \phi_\sigma \\
& \geq \tr  \phi_\rho\phi_\sigma-\tr(\phi_\rho- \phi_\tau)_+ \\
&=F^2(\rho, \sigma)-P(\phi_\rho,\phi_\tau)\\
&=F^2(\rho, \sigma)-P(\rho, \tau)\\
&\geq F^2(\rho, \sigma) -\sqrt{\|\rho-\tau\|_1}.
\end{split}
\end{equation}
\end{proof}

The following three lemmas are used in our proofs and they have been proved in~\cite[Lemma~30]{LiYao2024operational} and \cite[Lemma~31]{LiYao2024operational}, respectively.

\begin{lemma}
\label{lem:appen1}
Let $\rho, \sigma\in\mc{S}(\mc{H})$, and let $\mc{H}=\bigoplus_{i\in \mc{I}}\mc{H}_i$ decompose into a set of mutually orthogonal subspaces $\{\mc{H}_i\}_{i\in \mc{I}}$. Suppose that $\sigma=\sum\limits_{i \in \mc{I}} \sigma_i$ with $\supp(\sigma_i)\subseteq \mc{H}_i$. Then
\begin{equation}
F\left(\sum_{i \in \mc{I}} \Pi_i \rho \Pi_i, \sigma\right) \leq \sqrt{|\mc{I}|} F(\rho, \sigma),
\end{equation}
where $\Pi_i$ is the projection onto $\mc{H}_i$.
\end{lemma}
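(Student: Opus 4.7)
The plan is to reduce $F(\mathcal{P}(\rho),\sigma)$ and $F(\rho,\sigma)$ to a single PSD object and then close the gap with two Schwarz-type inequalities, where $\mathcal{P}(\rho):=\sum_i\Pi_i\rho\Pi_i$. First, I would use the block structure to simplify $F(\mathcal{P}(\rho),\sigma)$. Since $\sigma$ is block-diagonal with blocks $\sigma_i=\Pi_i\sigma\Pi_i$ supported on $\mc{H}_i$, we have $\sqrt{\sigma}=\sum_i\sqrt{\sigma_i}$, and analogously $\sqrt{\mathcal{P}(\rho)}=\sum_i\sqrt{\rho_i}$ for $\rho_i:=\Pi_i\rho\Pi_i$. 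The orthogonality $\sqrt{\rho_i}\sqrt{\sigma_j}=0$ for $i\neq j$ collapses the product to $\sqrt{\mathcal{P}(\rho)}\sqrt{\sigma}=\sum_i\sqrt{\rho_i}\sqrt{\sigma_i}$, and since these terms live in mutually orthogonal blocks their trace norms add, giving $F(\mathcal{P}(\rho),\sigma)=\sum_i F(\rho_i,\sigma_i)$.

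Next I would introduce the PSD operator $C:=\sqrt{\sigma}\rho\sqrt{\sigma}$. Using $\sqrt{\sigma_i}=\Pi_i\sqrt{\sigma}=\sqrt{\sigma}\Pi_i$ together with the identity $\tr\sqrt{A^\dagger A}=\tr\sqrt{AA^\dagger}$, one obtains the common expressions
\[
F(\rho_i,\sigma_i)=\tr\sqrt{\Pi_i C\Pi_i}=\|\Pi_i\sqrt{C}\|_1,\qquad F(\rho,\sigma)=\tr\sqrt{C}=\|\sqrt{C}\|_1.
\]
Writing $T:=\sqrt{C}\geq 0$, the lemma thus reduces to the clean statement
\[
\sum_{i\in\mc{I}}\|\Pi_i T\|_1\leq\sqrt{|\mc{I}|}\,\|T\|_1,
\]
an inequality for an arbitrary PSD operator and an orthogonal partition of identity.

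To prove this reduced inequality I would factor $\Pi_i T=(\Pi_i\sqrt{T})\cdot\sqrt{T}$ and apply the matrix H\"older inequality $\|AB\|_1\leq\|A\|_2\|B\|_2$, yielding
\[
\|\Pi_i T\|_1\leq\|\Pi_i\sqrt{T}\|_2\cdot\|\sqrt{T}\|_2=\sqrt{\tr(\Pi_i T)}\cdot\sqrt{\|T\|_1}.
\]
Summing over $i$ and applying Cauchy--Schwarz to $\sum_i\sqrt{\tr(\Pi_i T)}$ while using $\sum_i\tr(\Pi_i T)=\tr T=\|T\|_1$ closes the proof. The main conceptual hurdle is the first reduction: once one recognizes that both fidelities can be expressed through projections of the single operator $\sqrt{C}$, the remainder is routine. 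The factor $\sqrt{|\mc{I}|}$ appears naturally from the final Cauchy--Schwarz step and is saturated when the block weights $\tr(\Pi_i T)$ are all equal, matching the tightness visible in the pure-state case, where the inequality becomes exactly $\sum_i\sqrt{\langle\psi|\sigma_i|\psi\rangle}\leq\sqrt{|\mc{I}|}\sqrt{\langle\psi|\sigma|\psi\rangle}$ by Cauchy--Schwarz on the index set.
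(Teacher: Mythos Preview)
The paper does not supply its own proof of this lemma; it merely records the statement in the Appendix and cites \cite[Lemma~30]{LiYao2024operational} for the argument. So there is no in-paper proof to compare against.

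Your proof is correct. The two-step reduction is clean: first, because $\sigma$ is block-diagonal you legitimately get $F(\mathcal{P}(\rho),\sigma)=\sum_i F(\rho_i,\sigma_i)$; second, the identity $\sqrt{\sigma_i}\rho_i\sqrt{\sigma_i}=\Pi_i C\Pi_i$ with $C=\sqrt{\sigma}\rho\sqrt{\sigma}$, together with $\tr\sqrt{AA^\dagger}=\tr\sqrt{A^\dagger A}$, indeed yields $F(\rho_i,\sigma_i)=\|\Pi_i\sqrt{C}\|_1$. The remaining inequality $\sum_i\|\Pi_i T\|_1\le\sqrt{|\mathcal I|}\,\|T\|_1$ for $T\ge 0$ follows exactly as you say from the Schatten H\"older bound $\|\Pi_i T\|_1\le\|\Pi_i\sqrt{T}\|_2\|\sqrt{T}\|_2$ and Cauchy--Schwarz over the index set. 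All intermediate identities (e.g.\ $\|\Pi_i\sqrt{T}\|_2^2=\tr(\Pi_i T)$ by cyclicity) are valid without any commutation assumptions on $T$ and $\Pi_i$.

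It is worth noting that the naive route through the pinching bound for $D_{1/2}^*$ (Proposition~\ref{prop:mainpro}(v) in the paper, using $-\log F^2=D_{1/2}^*$) would only give the weaker factor $|\mathcal I|$ in place of $\sqrt{|\mathcal I|}$; your H\"older/Cauchy--Schwarz argument is exactly what is needed to recover the sharp constant, and your observation about equality when the block weights $\tr(\Pi_i T)$ are uniform confirms that $\sqrt{|\mathcal I|}$ cannot be improved.
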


\begin{lemma}
\label{lem:fidelity-re}
Let $\rho, \sigma, \tau \in\mc{S}(\mc{H})$ be any quantum states. Then we have
\begin{equation}\label{eq:fid-re}
-\log F^2(\rho,\sigma) \leq D(\tau\|\rho) + D(\tau\|\sigma).
\end{equation}
\end{lemma}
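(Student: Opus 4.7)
The plan is to reduce the statement to its classical counterpart via a fidelity-attaining measurement and then invoke Jensen's inequality once. First, I would invoke the Fuchs--Caves theorem, which guarantees a POVM $\{M_x\}_x$ on $\mc{H}$ such that the induced classical distributions $p_x = \tr(M_x \rho)$ and $q_x = \tr(M_x \sigma)$ satisfy $F(\rho,\sigma) = \sum_x \sqrt{p_x q_x}$, i.e.\ the measurement preserves fidelity exactly. Let $r_x = \tr(M_x \tau)$ be the outcome distribution obtained from $\tau$ under the same POVM, and denote the induced distributions by $P$, $Q$, $R$.

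Next, I would establish the purely classical bound
\[
-\log F^2(P,Q) \;\leq\; D(R\|P) + D(R\|Q)
\]
by a single application of Jensen's inequality to the concavity of $\log$: writing $\sum_x \sqrt{p_x q_x} = \sum_x r_x \frac{\sqrt{p_x q_x}}{r_x}$ yields
\[
\log F(P,Q) \;\geq\; \sum_x r_x \log \frac{\sqrt{p_x q_x}}{r_x} \;=\; -\tfrac{1}{2}\bigl[D(R\|P)+D(R\|Q)\bigr],
\]
and doubling gives the desired classical inequality.

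Finally, I would apply the data-processing inequality for the Umegaki relative entropy under the measurement channel $\mathcal{M}(X) = \sum_x \tr(M_x X)\,|x\rangle\!\langle x|$, which gives $D(R\|P)\leq D(\tau\|\rho)$ and $D(R\|Q)\leq D(\tau\|\sigma)$. Chaining the three facts,
\[
-\log F^2(\rho,\sigma) = -\log F^2(P,Q) \leq D(R\|P)+D(R\|Q)\leq D(\tau\|\rho)+D(\tau\|\sigma),
\]
which is the claim. The only subtle ingredient is invoking Fuchs--Caves so that equality $F(\rho,\sigma)=F(P,Q)$ is achieved for some POVM rather than a mere lower bound (otherwise the chain would run the wrong direction); once that is in place the Jensen step and the data-processing step are both standard one-liners, so no serious obstacle remains.
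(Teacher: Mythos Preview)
Your argument is correct. The Fuchs--Caves measurement gives $F(\rho,\sigma)=F(P,Q)$ exactly (not merely $F(P,Q)\ge F(\rho,\sigma)$, which as you note would run the wrong way), the Jensen step is valid once one restricts the sum to $\{x:r_x>0\}$ (a harmless restriction since it only lowers $\sum_x\sqrt{p_xq_x}$, and the case $r_x>0,\ p_x=0$ makes $D(R\|P)=+\infty$ so the bound is trivial), and data processing for $D(\cdot\|\cdot)$ closes the chain.

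The paper does not supply its own proof of this lemma; it simply cites \cite[Lemma~31]{LiYao2024operational}. A route more aligned with the toolkit already assembled in the paper would be: use the identity $-\log F^2(\rho,\sigma)=D_{1/2}^*(\rho\|\sigma)$, the ordering $D_{1/2}^*\le D_{1/2}^\flat$, and then Proposition~\ref{prop:mainpro}(iii) at $\alpha=\tfrac12$, which reads $D_{1/2}^\flat(\rho\|\sigma)=\min_{\tau}\{D(\tau\|\sigma)+D(\tau\|\rho)\}$. That argument trades the Fuchs--Caves theorem for the $D^*_\alpha\le D^\flat_\alpha$ inequality; yours is more elementary and self-contained, while the variational route explains why the lemma is really the $\alpha=\tfrac12$ instance of the log-Euclidean variational formula that drives the whole achievability proof.
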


\end{document}